\documentclass[9pt,twocolumn,twoside]{pnas-new}

\templatetype{pnasresearcharticle} 

\usepackage{color,soul}

\usepackage[T1]{fontenc}			
\usepackage[sc,osf]{mathpazo}   	

\usepackage{amsmath}  			
\usepackage{amsfonts}  			
\usepackage{graphicx}   			
\usepackage{mathrsfs, amsthm, amssymb}
\usepackage{bbm, bm}

\usepackage{nicefrac}    			

\usepackage{fourier-orns}  		

\usepackage[absolute]{textpos}

\usepackage[braket, qm]{qcircuit}	

\newtheorem{theorem}{Theorem}
\newtheorem{definition}{Definition}

\newtheorem{lemma}{Lemma}

\newtheorem{remark}{Remark}


\title{Violations of locality and free choice are equivalent resources in Bell experiments}

\author[a,b,1]{Pawel Blasiak}
\author[b]{Emmanuel M. Pothos} 
\author[b]{James M. Yearsley}
\author[c]{Christoph Gallus}
\author[a]{Ewa Borsuk}

\affil[a]{Institute of Nuclear Physics Polish Academy of Sciences, PL-31342 Krakow, Poland}
\affil[b]{City, University of London, London EC1V 0HB, United Kingdom}
\affil[c]{Technische Hochschule Mittelhessen, D-35390 Gießen, Germany}

\leadauthor{Blasiak} 

\significancestatement{Faced with a violation of Bell inequalities, a committed realist might pursue an explanation of the observed correlations on the basis of violations of the \textit{locality} or \textit{free} \textit{choice} (sometimes called \textit{measurement} \textit{independence}) assumptions. The question of whether it is better to abandon (partially or completely) locality or free choice has been strongly debated since the inception of Bell inequalities, with ardent supporters on either side. We offer a comprehensive treatment that allows a comparison of both assumptions on an equal footing, demonstrating a deep interchangeability. This advances the foundational debate and provides quantitative answers regarding the weight of each assumption for causal (or realist) explanations of observed correlations.}

\authorcontributions{P.B. designed and performed the research; P.B, E.M.P, J.M.Y. C.G. and E.B. discussed the results and wrote the paper.}

\authordeclaration{The authors declare no competing interests.}
\correspondingauthor{\textsuperscript{1}To whom correspondence should be addressed. E-mail: pawel.blasiak@ifj.edu.pl}

\keywords{locality $|$ free choice $|$ causality $|$ Bell inequalities $|$ \\measure of locality and free choice} 

\begin{abstract}
Bell inequalities rest on three fundamental assumptions: \textit{realism}, \textit{locality}, and \textit{free choice}, which lead to nontrivial constraints on correlations in very simple experiments. If we retain realism, then violation of the inequalities implies that at least one of the remaining two assumptions must fail, which can have profound consequences for the causal explanation of the experiment. We investigate the extent to which a given assumption needs to be relaxed for the other to hold at all costs, based on the observation that a violation need not occur on every experimental trial, even when describing correlations violating Bell inequalities. How often this needs to be the case determines the degree of, respectively, \textit{locality} or \textit{free choice} in the observed experimental behavior. Despite their disparate character, we show that both assumptions are \textit{equally costly}. Namely, the resources required to explain the experimental statistics (measured by the frequency of causal interventions of either sort) are exactly the same. Furthermore, we compute such defined measures of locality and free choice for any \textit{nonsignaling} statistics in a Bell experiment with binary settings, showing that it is directly related to the amount of violation of the so-called Clauser-Horne-Shimony-Holt inequalities. This result is theory independent as it refers directly to the experimental statistics. Additionally, we show how the local fraction results for quantum-mechanical frameworks with infinite number of settings translate into analogous statements for the measure of free choice we introduce. Thus, concerning statistics, causal explanations resorting to either locality or free choice violations are \textit{fully} interchangeable.
\end{abstract}

\doi{\href{https://doi.org/10.1073/pnas.2020569118}{www.pnas.org/cgi/doi/10.1073/pnas.2020569118}}

\begin{document}

\maketitle
\thispagestyle{firststyle}
\ifthenelse{\boolean{shortarticle}}{\ifthenelse{\boolean{singlecolumn}}{\abscontentformatted}{\abscontent}}{}

\begin{textblock}{10}(7.8,0.5) For published version see \href{https://doi.org/10.1073/pnas.2020569118}{\bf PNAS 118 (17) e2020569118 (2021)}
\end{textblock}

\begin{quote}
\begin{flushright}
\textit{"I would rather discover one true cause\\than gain the kingdom of Persia."}

\textit{-- Democritus (c. 460-370 BC)}
\end{flushright}
\end{quote}

\dropcap{T}he study of experimental correlations provides a window into the underlying causal mechanisms, even when their exact nature remains obscured. In his seminal works~\cite{Be93,Me93,Wi14a,BrCaPiScWe14,Sc19}, John Bell showed that seemingly innocuous assumptions about the structure of causal relationships leave a mark on the observed statistics. The first assumption, called \textit{realism} (or counterfactual definiteness), presents the worldview in which physical objects and their properties exist, whether they are observed or not. Note that realism allows a standard notion of \textit{causality}~\cite{Be87a,Pe09}, which in turn provides us with the language to express the remaining two assumptions. The \textit{locality} assumption is a statement that physical (or causal) influences propagate in accord with the spatio-temporal structure of events (i.e., neither backward in time nor instantaneous causation). The \textit{free} \textit{choice} assumption asserts that the choice of measurement settings can be made independently from anything  in the (causal) past. These three assumptions are enough to derive testable constraints on correlations called Bell inequalities.

Surprisingly, nature violates Bell inequalities~\cite{AsDaRo82,GiVeWeHaHoPhSt15,ShMeChBiWaStGe15,HeBeDrReKaBlRu15,As15,GaFrKa14,RaHaHoGaFrLeLi18,AbAcAlAlAnAnBe18} which means that if the standard causal (or \textit{realist}) picture is to be maintained at least one of the remaining two assumptions, that is \textit{locality} or \textit{free} \textit{choice}, has to fail. It turns out that rejecting just one of those two assumptions is always enough to explain the observed correlations, while maintaining consistency with the causal structure imposed by the other. Either option poses a challenge to deep-rooted intuitions about reality, with a full range of viable positions open to serious philosophical dispute~\cite{Ma19,La19,No17}. Notably, quantum theory in its operational formulation does not provide any  clue regarding the causal structure at work, leaving such questions to the domain of interpretation. It is therefore interesting to ask about the extent to which a given assumption needs to be relaxed, if we insist on upholding the other one (while always maintaining \textit{realism}). In this paper, we seek to compare the cost of \textit{locality} and \textit{free choice} on an equal footing, without any preconceived conceptual biases. As a basis for comparison we choose to measure the weight of a given assumption in terms of the following question:

\textit{\parbox{0.92\columnwidth}{How often a given assumption, i.e. locality or free choice, can be retained, while  safeguarding the other assumption, in order to fully reproduce some given experimental statistics within a standard causal (or realist) approach?}}

\noindent This question presumes that a Bell experiment is performed trial-by-trial and the observed statistics can be explained in the standard causal model (or hidden variable) framework~\cite{Be93,Me93,Wi14a,BrCaPiScWe14,Sc19,Be87a,Pe09,WoSp15,ChKuBrGr15,Ca18}, which subsumes \textit{realism}. It means that the remaining two assumptions of \textit{locality} and \textit{free choice} translate into conditional independence between certain variables in the model, whose causal structure is determined by their spatio-temporal relations~\cite{Be87a,CoRe13a} (for some alternative approach endorsing indefinite causal structures see e.g.~\cite{Br14,AlBaHoLeSp17}, or~\cite{WhAr20} for discussion of retrocausality). Modelling of the experiment implies that in each run of the experiment all variables (including unobserved or hidden ones) always take definite values and the statistics accumulates over many trials. This leaves open the possibility that the violation of the assumptions do not have to occur on each run of the experiment to explain the given statistics. We can thus put flesh on the bones of the above question and seek the maximal proportion of trials in which a given assumption can be retained, while safeguarding the other assumption, so as to fully reproduce some given statistics. In the following, we shall denote so defined \textit{measure of locality} (safeguarding freedom of choice) as $\mu_{\scriptscriptstyle L}$ and \textit{measure of free choice} (safeguarding locality) as $\mu_{\scriptscriptstyle F}$. Also, without stating this in every instance, we note that in all subsequent discussion \textit{realism} is assumed.\footnote{As noted, realism is subsumed in the standard notion of causality, which is implicit in the definition of locality and free choice~\cite{Be93,Me93,Wi14a,BrCaPiScWe14,Sc19,Be87a}. So, henceforth, referring to the standard causal framework implies the realist approach. We also remark that, although, "realism" goes under different guises in the literature (e.g. "counterfactual definiteness", '"local causality", "hidden causes", etc.), for our purposes those distinctions are irrelevant and the underlying mathematics remains the same, i.e. it boils down to the hidden variable framework (which beyond physics is frequently referred to as the structural causal models~\cite{Pe09}). See~\cite{Wi14a,No07} for some discussion.}

There has been some previous research on this theme. A measure of locality analogous to $\mu_{\scriptscriptstyle L}$ was first proposed by Elitzur, Popescu and Rohrlich~\cite{ElPoRo92} to quantify non-locality in a singlet state. Note, it seems that the original idea of a bound for such a locality measure was expressed earlier, in~\cite{Ha91}, but a bound was not worked out. In any case, Elitzur, Popescu, and Rohrlich's measure was dubbed \textit{local} \textit{fraction} (or \textit{content}) and shown (with improvements in~\cite{BaKePi06,CoRe08,CoRe16}) to vanish in the limit of an infinite number of measurement settings. A substantial step was made in~\cite{PoBrGi12} where the local fraction is explicitly calculated for any pure two-qubit state for an arbitrary choice of settings. We note that those results concern measure $\mu_{\scriptscriptstyle L}$ only for the specific case of quantum-mechanical predictions. In this paper we go beyond this framework and consider the case of general experimental statistics (see~\cite{AbBaMa17} for extension to the idea of contextuality). To avoid confusion, the term \textit{local fraction} for measure $\mu_{\scriptscriptstyle L}$ will be only used in relation to the quantum case. Furthermore, we propose a similar treatment of the free choice assumption quantified by measure $\mu_{\scriptscriptstyle F}$. Natural as it may seem, this approach has not been pursued in the literature, with some other measures proposed to this effect~\cite{Br88,Ha10,Ha16,Ha10,Ha11b,BaGi11,PuRoBaLiGi14,AkTaMaPuThGi15,PuGi16,HaBr20} (all retaining locality as a principle, but departing from the original notion of free choice introduced by Bell~\cite{Be87a,CoRe13a}). 

We aim to comprehensively consider the extent to which a given assumption, i.e. \textit{locality} or \textit{free choice}, can be preserved through partial violation of the other assumption. To accomplish this,  we provide similar definitions and discuss on an equal footing both measures of locality $\mu_{\scriptscriptstyle L}$ and free choice $\mu_{\scriptscriptstyle F}$. Then, we derive the following results. First, we prove a general structural theorem about causal models explaining any given experimental statistics in a Bell experiment (for any number of settings) showing that such defined measures are necessarily equal, $\mu_{\scriptscriptstyle L}=\mu_{\scriptscriptstyle F}$. This result consolidates those two disparate concepts demonstrating their deep interchangeability. Second, we explicitly compute both measures for any \textit{non-signalling} statistics in a two-setting and two-outcome Bell scenario. This enables a direct interpretation to the amount of violation of the Clauser-Horne-Shimony-Holt (CHSH) inequalities~\cite{ClHoShHo69}. Third, we consider the special case of the quantum statistics with infinite number of settings, utilising  existing results for the local fraction $\mu_{\scriptscriptstyle L}$, which thus translate on the newly developed concept of the measure of free choice $\mu_{\scriptscriptstyle F}$. Fig.~\ref{Fig_Summary} summarises the results in the paper.

\begin{figure}[t]

\centering
\includegraphics[width=1\columnwidth]{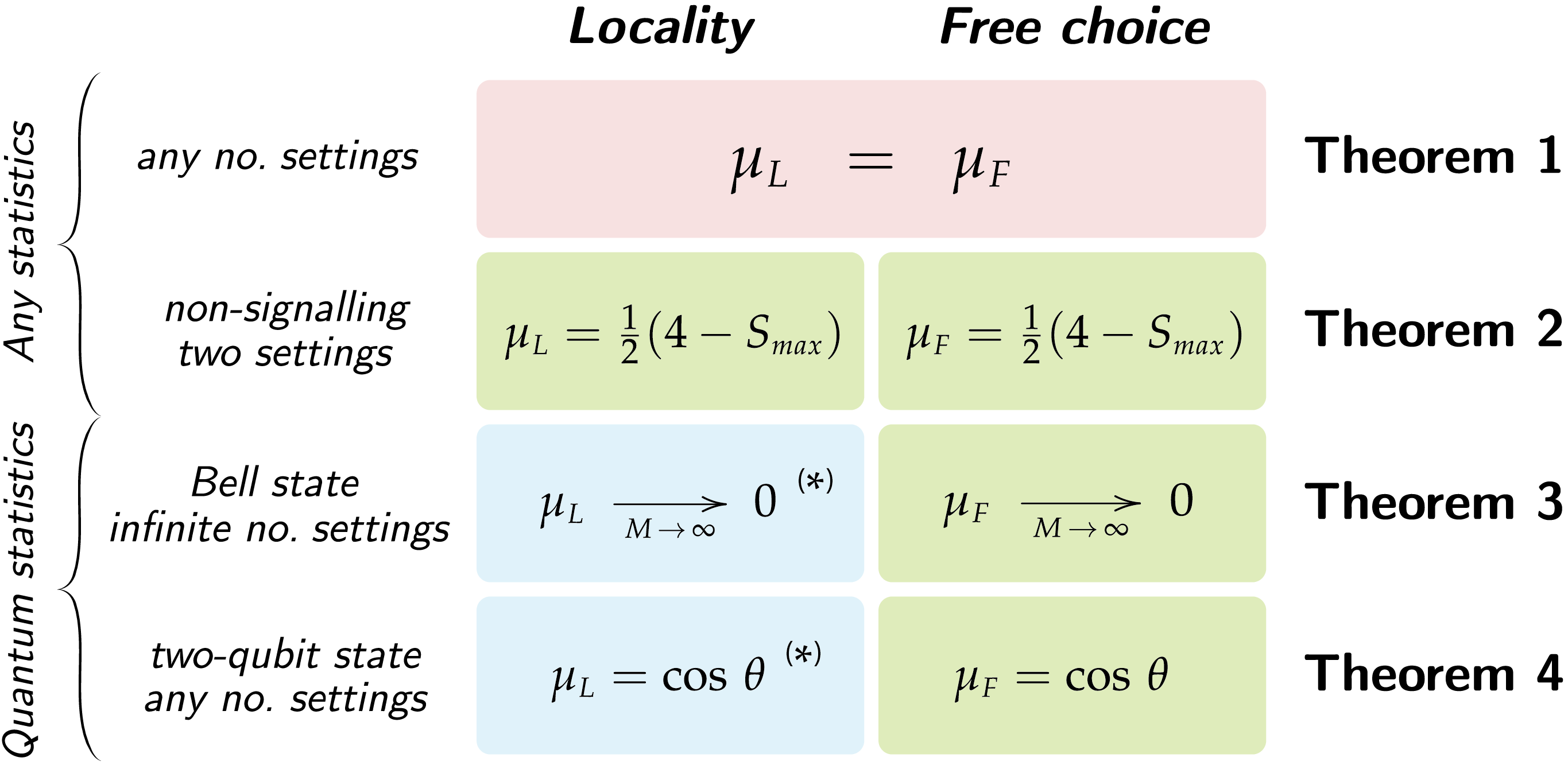}
\caption{\label{Fig_Summary}{\bf\textsf{\mbox{Summary of the results.}}} The main  \textbf{Theorem~\ref{Theorem-equivalence}} is the backbone of the paper, consolidating both measures of locality $\mu_{\scriptscriptstyle L}$ and free choice $\mu_{\scriptscriptstyle F}$.  \textbf{Theorem~\ref{Theorem-CHSH}} is a theory-independent result about both measures $\mu_{\scriptscriptstyle L}$ and $\mu_{\scriptscriptstyle F}$. It offers a concrete interpretation for the amount of violation of the CHSH inequalities. \textbf{Theorems~\ref{Theorem-ChainedBell}} and \textbf{\ref{Theorem-Portmann}} are specific to the quantum-mechanical statistics stated here for measure $\mu_{\scriptscriptstyle F}$. They are translations of some remarkable local fraction results $\mu_{\scriptscriptstyle L}$ in the literature (marked with an $^{(*)}$, cf.~\cite{BaKePi06,CoRe08,CoRe16,PoBrGi12}).}
\end{figure}

\section*{Results}

\subsection*{Bell experiment and Fine's theorem}

Let us consider the usual Bell-type scenario with two parties, called Alice and Bob, playing the roles of agents conducting experiments on two separated systems (whose nature is irrelevant for the argument). We assume that on each side there are two possible outcomes labelled respectively $a,b=\pm1$ and $M$ possible measurement settings labelled respectively $x,y\in\mathfrak{M}$ where $\mathfrak{M}\equiv\{1,2\,,\dots,M\}$. A Bell experiment consists of a series of trials in which Alice and Bob each choose a setting and make a measurement registering the outcome. After many repetitions, they compare their results described by the set of $M\times M$ distributions $\{P_{\scriptscriptstyle  ab|xy}\}_{\scriptscriptstyle xy}$\,, where $P_{\scriptscriptstyle  ab|xy}$ denotes the probability of obtaining outcomes $a,b$, given measurements $x,y$ were made on Alice and Bob's side respectively. For conciseness, following the terminology in~\cite{Sc19}, we will call  $\{P_{\scriptscriptstyle  ab|xy}\}_{\scriptscriptstyle xy}$ a \textit{"behaviour"}. Note that without assuming anything about the causal structure underlying the experiment any behaviour is admissible (as long as the distributions are normalised, i.e. $\sum_{\scriptscriptstyle a,b}P_{\scriptscriptstyle  ab|xy}=1$ for each $x,y\in\mathfrak{M}$). In particular, quantum theory gives a  prescription for calculating the experimental statistics $P_{\scriptscriptstyle  ab|xy}$ for each choice of settings $x,y\in\mathfrak{M}$ based on the formalism of Hilbert spaces.

It is instructive to recall the special case of two measurement settings on each side $x,y\in\mathfrak{M}=\{0,1\}$ for which Bell derived his seminal result. Briefly, this can be expressed by saying that any \textit{local} hidden variable model with \textit{free choice} has to satisfy the following four CHSH inequalities~\cite{ClHoShHo69}
\begin{eqnarray}\label{Bell-CHSH-inequalities}
|S_{\scriptscriptstyle  i}|\ \leqslant\ 2&&\qquad\text{for\ \  $i=1,...\,,4$}\,,
\end{eqnarray}
where
\begin{eqnarray}\label{S1}
S_{\scriptscriptstyle  1}&\!\!=\!\!&\ \ \ \langle ab\rangle_{\scriptscriptstyle  00}+\langle ab\rangle_{\scriptscriptstyle  01}+\langle ab\rangle_{\scriptscriptstyle  10}-\langle ab\rangle_{\scriptscriptstyle  11}\,,\\\label{S2}
S_{\scriptscriptstyle  2}&\!\!=\!\!&\ \ \ \langle ab\rangle_{\scriptscriptstyle  00}+\langle ab\rangle_{\scriptscriptstyle  01}-\langle ab\rangle_{\scriptscriptstyle  10}+\langle ab\rangle_{\scriptscriptstyle  11}\,,\\\label{S3}
S_{\scriptscriptstyle  3}&\!\!=\!\!&\ \ \ \langle ab\rangle_{\scriptscriptstyle  00}-\langle ab\rangle_{\scriptscriptstyle  01}+\langle ab\rangle_{\scriptscriptstyle  10}+\langle ab\rangle_{\scriptscriptstyle  11}\,,\\\label{S4}
S_{\scriptscriptstyle  4}&\!\!=\!\!&-\langle ab\rangle_{\scriptscriptstyle  00}+\langle ab\rangle_{\scriptscriptstyle  01}+\langle ab\rangle_{\scriptscriptstyle  10}+\langle ab\rangle_{\scriptscriptstyle  11}\,,
\end{eqnarray}
with $\langle ab\rangle_{\scriptscriptstyle  xy}=\sum_{\scriptscriptstyle a,b}\,ab\,P_{\scriptscriptstyle  ab|xy}$ being correlation coefficients for a given choice of settings $x,y$. Interestingly, by virtue of Fine's theorem~\cite{Fi82a,Ha14b}, this is also a sufficient condition for a \textit{non-signalling} behaviour $\{P_{\scriptscriptstyle  ab|xy}\}_{\scriptscriptstyle  xy}$ to be explained by a \textit{local} hidden variable model with \textit{freedom of choice} (for non-signalling see Eqs.~(\ref{non-signalling-Alice}) and (\ref{non-signalling-Bob})).

It is crucial to observe that, although locality and freedom of choice are two disparate concepts with different ramifications for our understanding of the experiment, they are in a certain sense interchangeable. If locality is dropped with Alice and Bob freely choosing their settings, then the boxes, by influencing one another, can produce any behaviour $\{P_{\scriptscriptstyle  ab|xy}\}_{\scriptscriptstyle  xy}$. Similarly, a violation of the free choice assumption can be used to reproduce any behaviour $\{P_{\scriptscriptstyle  ab|xy}\}_{\scriptscriptstyle  xy}$, without giving up locality. It is straightforward to see how this might work if one of the two assumptions fails on \textit{every} experimental trial.\footnote{For the simulation of a given behaviour $\{P_{\scriptscriptstyle  ab|xy}\}_{\scriptscriptstyle  xy}$ in a Bell experiment one may proceed as follows. Upon rejection of locality, in \textit{each}\ trial the system on Alice's side, one may not only use input $x$ but also $y$ to generate outcomes (and similarly for the box on Bob's side) that comply with the appropriate distribution. On the other hand, when freedom of choice is abandoned, both settings $x,y$ may be specified in advance on \textit{each} trial and the boxes can be instructed to provide the outcomes needed to simulate the appropriate distribution. It is however unclear how this might work with \textit{occasional} violation of the respective assumptions.}



However, such a complete renouncement of assumptions so central to our view of nature may seem excessive, especially when the CHSH inequalities are violated only by a little amount (less than the maximal algebraic bound of $|S_{\scriptscriptstyle  i}|\leqslant4$), leaving room for a possible explanation of the experimental statistics by rejecting one of the assumptions \textit{sometimes} only. Here we assess the cost of such a partial violation by asking how often a given assumption can be retained in order to account for a  behaviour $\{P_{\scriptscriptstyle  ab|xy}\}_{\scriptscriptstyle  xy}$\,. We will investigate both cases in parallel: (\ref{locality-informal}) full freedom of choice with \textit{occasional} non-locality (communication), and (\ref{freedom-informal}) the possibility of retaining full locality at a price of compromising freedom of choice (by controlling or rigging measurement settings) on \textit{some} of the trials. 
We shall use the least frequency of violation, required to model some statistics with a hypothetical simulation, as a natural figure of merit, guided by the principle that the less the violation the better. Notably, such simulations should not restrict possible distributions of measurement settings $P_{\scriptscriptstyle  xy}$. In other words, we define a \textit{measure of locality} $\mu_{\scriptscriptstyle  L}$ as
\begin{equation}\label{locality-informal}
    \textit{\parbox{.85\columnwidth}{the \underline{maximal} fraction of trials in which Alice and Bob do not need to communicate trying to simulate a given behaviour $\{P_{\scriptscriptstyle  ab|xy}\}_{\scriptscriptstyle  xy}$\,, optimised over \underline{all} conceivable strategies with freely chosen settings.}}\quad \tag{$\scriptstyle \spadesuit$}
\end{equation}
Similarly, we define a \textit{measure of free choice} $\mu_{\scriptscriptstyle  F}$ as
\begin{equation}\label{freedom-informal}
    \textit{\parbox{.85\columnwidth}{the \underline{maximal} fraction of trials in which Alice and Bob can grant free choice of settings in trying to simulate a given behaviour $\{P_{\scriptscriptstyle  ab|xy}\}_{\scriptscriptstyle  xy}$\,, optimised over \underline{all} conceivable local strategies.}}\quad\tag{$\scriptstyle\clubsuit$}
\end{equation}
In the quantum-mechanical context the measure $\mu_{\scriptscriptstyle  L}$ is  called a \textit{local fraction}~\cite{ElPoRo92,Ha91,BaKePi06,CoRe08,CoRe16,PoBrGi12}. By analogy, when considering the quantum-mechanical statistics the measure $\mu_{\scriptscriptstyle  F}$ might be called a \textit{free fraction}. This provides an equal basis for comparing the two assumptions within the standard causal (or realist) approach, which we formalise in the following section.

\subsection*{Causal models, locality and free choice}

The appropriate framework for the discussion of locality and free choice is provided by hidden variable models~\cite{Be93,Me93,Wi14a,BrCaPiScWe14,Sc19}. First, a hidden variables model allows a formal statement of the \textit{realism} assumption, understood to mean that properties of a physical system exist irrespective of an act of measurement (counterfactual definiteness). Second, hidden variable models provide the causal language in which the locality and free choice assumptions are expressed~\cite{Be87a,Pe09}. The \textit{locality} assumption conveys the requirement that the propagation of physical (or causal) influences have to follow the spatio-temporal structure of events (i.e., preserve the arrow of time and respect that actions at a distance require time). The \textit{free} \textit{choice} assumption concerns the choice of measurement settings which are deemed cusally unaffected by anything in the past (and thus it is sometimes called \textit{measurement} \textit{independence}).\footnote{As noted, the \textit{free} \textit{choice} assumption is sometimes called \textit{measurement} \textit{independence}. Instead of on the agent, measurement independence is focussed on the measurement devices and possible correlations between their settings, which can affect the observed statistics. Regardless of interpretation, the mathematics remains the same, with the source of correlations traced to some common factor (in the causal past).} Both assumptions take the form of conditional independencies between certain variables in a hidden variables model.

To make this idea more concrete, let us consider a given set of probability distributions (behaviour) $\{P_{\scriptscriptstyle ab|xy}\}_{\scriptscriptstyle xy}$ which describes the statistics in a Bell experiment. Without loss of generality, by conditioning on $\lambda$ in some \textit{a priori} unknown hidden variable space $\Lambda$, one can always write~\cite{BrCaPiScWe14,Sc19,Pe09}
\begin{eqnarray}\label{Pab|xy}
P_{\scriptscriptstyle ab|xy}\ =\ \sum_{\scriptscriptstyle \lambda\in\Lambda}P_{\scriptscriptstyle ab|xy\lambda}\cdot P_{\scriptscriptstyle \lambda|xy}\,,
\end{eqnarray}
where $P_{\scriptscriptstyle \lambda|xy}$ and $P_{\scriptscriptstyle ab|xy\lambda}$ are valid (i.e. normalised) conditional probability distributions. The role of the hidden variable (cause in the past) $\lambda\in\Lambda$, distributed according to some $P_{\scriptscriptstyle \lambda}$, is to provide an explanation of the observed experimental statistics. This means that at each run of the experiment the outcomes are described by the distribution $P_{\scriptscriptstyle ab|xy\lambda}$ with $\lambda\in\Lambda$ fixed in a given trial, so that the accumulated experimental statistics $P_{\scriptscriptstyle ab|xy}$ obtains by sampling from some distribution $P_{\scriptscriptstyle \lambda|xy}$ over the whole hidden variable space $\Lambda$. It is customary to say that 
\begin{equation}\label{HV}
	\textit{\parbox{.81\columnwidth}{the choice of space $\Lambda$ and probability distribution $P_{\scriptscriptstyle \lambda}$ along with conditional distributions $P_{\scriptscriptstyle ab |xy\lambda}$ and $P_{\scriptscriptstyle \lambda|xy}$ satisfying Eq.~(\ref{Pab|xy}) specify a \underline{hidden variable} (HV) model of a given behaviour  $\{P_{\scriptscriptstyle ab|xy}\}_{\scriptscriptstyle xy}$\,.}}\quad\tag{$\star$}
\end{equation}
Note that such a model implicitly describes the distribution of settings chosen by Alice and Bob through the standard formula
\begin{eqnarray}\label{distribution-settings}
P_{\scriptscriptstyle xy}\ =\ \sum_{\scriptscriptstyle \lambda\in\Lambda}P_{\scriptscriptstyle xy|\lambda}\cdot P_{\scriptscriptstyle \lambda}\,.
\end{eqnarray}

\begin{figure}[t]
\centering
\includegraphics[width=0.95\columnwidth]{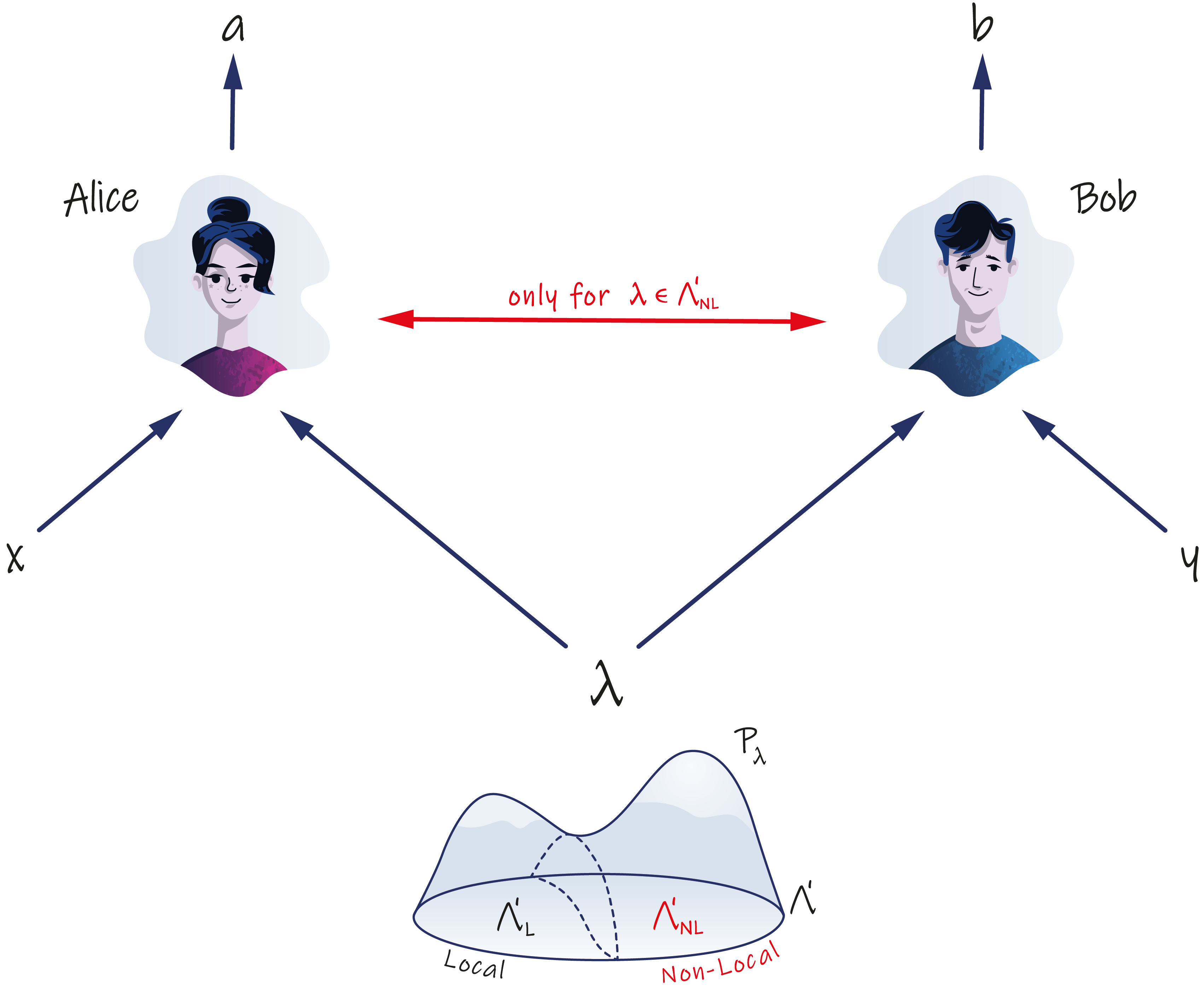}
\caption
{\label{Fig_CausalDiag_Locality}{\bf\textsf{Causal model with some non-locality (communication).}} In a Bell scenario, with free choice of settings, correlations between Alice and Bob's outcomes have two possible explanations: common cause in the past or causal influence between the parties. In any causal model the space of hidden variables (representing common causes) splits into two disjoint parts $\Lambda'=\Lambda'_{\scriptscriptstyle L}\cup\Lambda'_{\scriptscriptstyle NL}$ distinguished by whether, for a given $\lambda\in\Lambda'$, causal influence occurs or not, Eq.~(\ref{locality-splitting}). Then, \textit{locality} is measured by the proportion of events when locality is maintained, which is equal to the probability accumulated over subset $\Lambda'_{\scriptscriptstyle L}$, i.e. ${{Prob}}\,(\lambda\in\Lambda'_{\scriptscriptstyle L})\,\equiv\,\sum_{\scriptscriptstyle \lambda\in\Lambda'_{\scriptscriptstyle L}}P_{\scriptscriptstyle \lambda}$.}
\end{figure}

So far the framework is general enough to accommodate \textit{any} causal explanation of the statistics observed in the experiment. The assumptions of locality and free choice take the form of constraints on conditional distributions in~(\ref{HV}). For a \textit{local hidden variable} (LHV) model, we require the following factorisation\footnote{Locality can be seen as a conjunction of two conditions: \textit{parameter independence} $P_{\scriptscriptstyle a|xy\lambda}=P_{\scriptscriptstyle a|x\lambda}$ \& $P_{\scriptscriptstyle b|xy\lambda}=P_{\scriptscriptstyle b|y\lambda}$, and \textit{outcome independence} $P_{\scriptscriptstyle a|bxy\lambda}=P_{\scriptscriptstyle a|xy\lambda}$ \& $P_{\scriptscriptstyle b|axy\lambda}=P_{\scriptscriptstyle b|xy\lambda}$. One can show that such defined locality entails the factorisation condition $P_{\scriptscriptstyle ab|xy\lambda}=P_{\scriptscriptstyle a|x\lambda}\cdot P_{\scriptscriptstyle b|y\lambda}$~\cite{Ja84}.}
\begin{eqnarray}\label{factorisation}
P_{\scriptscriptstyle ab|xy\lambda}\ =\ P_{\scriptscriptstyle a|x\lambda}\cdot P_{\scriptscriptstyle b|y\lambda}\,,
\end{eqnarray}
for each $x,y\in\mathfrak{M}$ and all $\lambda\in\Lambda$. The \textit{freedom of choice} assumption consists of requiring that $\lambda$ does not contain any information about variables $x,y$ representing Alice and Bob's choice of measurement settings. This boils down to the independence condition~\cite{Be87a,CoRe13a}
\begin{eqnarray}\label{free-choice}
\qquad\qquad P_{\scriptscriptstyle \lambda|xy}\ =\ P_{\scriptscriptstyle \lambda}\qquad\text{(or equivalently\ \  $P_{\scriptscriptstyle xy|\lambda}\,=\,P_{\scriptscriptstyle xy}$)\,,}
\end{eqnarray}
holding for $x,y\in\mathfrak{M}$ and  all $\lambda\in\Lambda$. In the following, we will abbreviate a \textit{hidden variable} model with \textit{freedom of choice} as FHV model.

\begin{figure}[t]
\centering
\includegraphics[width=0.95\columnwidth]{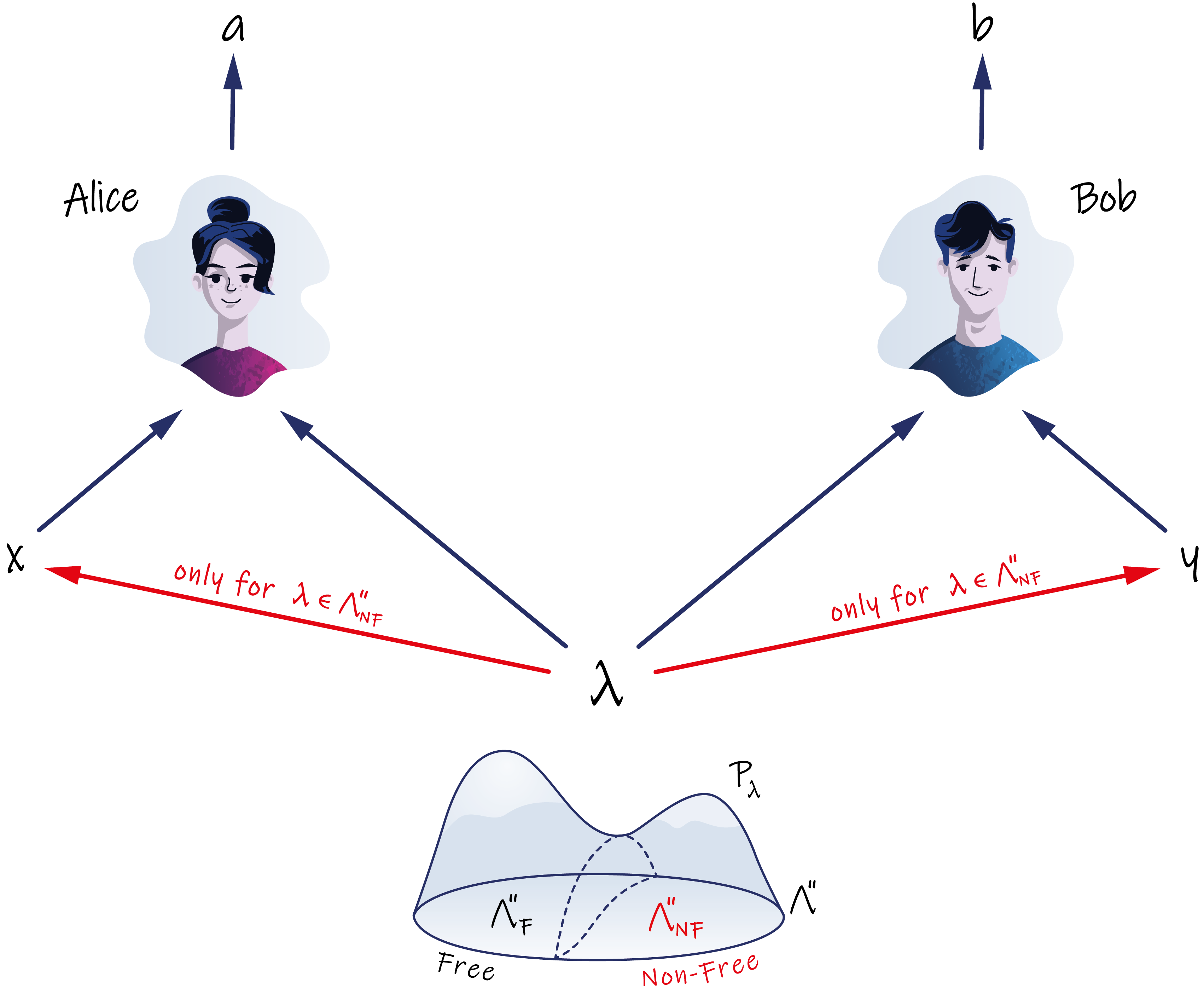}
\caption
{\label{Fig_CausalDiag_FreeChoice}{\bf\textsf{Causal model with some freedom of choice (rigging).}} In a Bell scenario, with locality assumption, correlations between the outcomes on Alice and Bob's side can be explained by a common cause affecting choice or not (the latter implies freedom of choice). In any causal model the space of hidden variables (representing common causes) splits into two disjoint parts $\Lambda''=\Lambda''_{\scriptscriptstyle F}\cup\Lambda''_{\scriptscriptstyle NF}$ distinguished by whether, for a given $\lambda\in\Lambda''$, the choice is free or not, Eqs.~(\ref{freedom-splitting}). Then, the parties enjoy \textit{freedom of choice} only on the trials when $\lambda\in\Lambda''_{\scriptscriptstyle F}$, which happens with a frequency equal to the probability accumulated over subset $\Lambda''_{\scriptscriptstyle F}$, i.e. ${{Prob}}\,(\lambda\in\Lambda''_{\scriptscriptstyle F})\,\equiv\,\sum_{\scriptscriptstyle \lambda\in\Lambda''_{F}}P_{\scriptscriptstyle \lambda}$.}
\end{figure}

The crucial point  is the distinction between \textit{local} vs \textit{non-local} as well as \textit{free} vs \textit{non-free} situations in the individual runs of the experiment modelled by Eq.~(\ref{Pab|xy}).
This means that each condition Eq.~(\ref{factorisation}) and Eq.~(\ref{free-choice}) should be considered separately for each $\lambda\in\Lambda$, i.e. whenever the respective condition does not hold for a given $\lambda$ the assumption fails on the corresponding experimental trials. Such a distinction leads to a natural splitting of the underlying HV space into two \textit{unique} partitions $\Lambda=\Lambda_{\scriptscriptstyle L}\cup\Lambda_{\scriptscriptstyle NL}$ and $\Lambda=\Lambda_{\scriptscriptstyle F}\cup\Lambda_{\scriptscriptstyle NF}$. The first one divides $\Lambda$ by the locality property
\begin{eqnarray}\label{locality-splitting}
\begin{array}{lcl}
\lambda\in\Lambda_{\scriptscriptstyle L}&\ \Leftrightarrow\ &\text{Eq.~(\ref{factorisation})}\ \ \text{holds for \textit{all} $x,y\in\mathfrak{M}$}\,,\vspace{0.1cm}\\
\lambda\in\Lambda_{\scriptscriptstyle NL}&\ \Leftrightarrow\ &\text{Eq.~(\ref{factorisation})}\ \ \text{fails for \textit{some} $x,y\in\mathfrak{M}$}\,,
\end{array}
\end{eqnarray}
while the second one divides $\Lambda$ by the free choice property
\begin{eqnarray}\label{freedom-splitting}
\begin{array}{lcl}
\lambda\in\Lambda_{\scriptscriptstyle F}&\ \Leftrightarrow\ &\text{Eq.~(\ref{free-choice})}\ \ \text{holds for \textit{all} $x,y\in\mathfrak{M}$}\,,\vspace{0.1cm}\\
\lambda\in\Lambda_{\scriptscriptstyle NF}&\ \Leftrightarrow\ &\text{Eq.~(\ref{free-choice})}\ \ \text{fails for \textit{some} $x,y\in\mathfrak{M}$}\,.
\end{array}
\end{eqnarray}

Figs.~\ref{Fig_CausalDiag_Locality} and \ref{Fig_CausalDiag_FreeChoice} illustrate the causal structures for two extreme cases: FHV and LHV models (in general built on different HV spaces $\Lambda'$ and $\Lambda''$). The first one grants full freedom of choice ($\Lambda'=\Lambda'_{\scriptscriptstyle F}$) while allowing for partial violation of locality ($\Lambda'\supset\Lambda'_{\scriptscriptstyle L}$). The second one retains full locality ($\Lambda''=\Lambda''_{\scriptscriptstyle L}$) while admitting some violation of free choice ($\Lambda''\supset\Lambda''_{\scriptscriptstyle F}$).

Thus, for a given experimental trial (with $\lambda\in\Lambda$ fixed) the constraints in Eqs. (\ref{locality-splitting}) and (\ref{freedom-splitting}) indicate, respectively, whether some non-local influence between the parties takes place ($\lambda\in\Lambda_{\scriptscriptstyle NL}$) and whether some influence from the past on the measurement settings occurs ($\lambda\in\Lambda_{\scriptscriptstyle NF}$). In other words, in a hypothetical simulation scenario these possibilities correspond to, respectively, communication or rigging measurement settings. How often this has to happen depends on the distribution $P_{\scriptscriptstyle \lambda}$. This picture lends itself to quantifying the degree of locality and freedom choice in a given HV model.

\begin{remark}\label{remark-definition}
For a given HV model (\ref{HV}) locality is measured by
$\textsl{{Prob}}\,(\lambda\in\Lambda_{\scriptscriptstyle L})\,\equiv\,\sum_{\scriptscriptstyle \lambda\in\Lambda_{L}}P_{\scriptscriptstyle \lambda}$, and similarly freedom of choice is measured by
$\textsl{{Prob}}\,(\lambda\in\Lambda_{\scriptscriptstyle F})\,\equiv\,\sum_{\scriptscriptstyle \lambda\in\Lambda_{F}}P_{\scriptscriptstyle \lambda}$.
\end{remark}
\noindent This remark captures the intuition of measuring locality and freedom of choice by considering the proportion of trials when the respective property is maintained  across the whole experimental ensemble.  We note that this quantity is model-dependent, since it is a property of a particular HV model adopted to explain some given experimental statistics $\{P_{\scriptscriptstyle ab|xy}\}_{\scriptscriptstyle xy}$ (including the distribution of measurement settings $P_{\scriptscriptstyle xy}$, cf. Eq.~(\ref{distribution-settings})).

The concepts just introduced allow a precise expression for the informal definitions~(\ref{locality-informal}) and (\ref{freedom-informal}) given above.

\begin{definition}\label{definition}
For a given behaviour $\{P_{\scriptscriptstyle ab|xy}\}_{\scriptscriptstyle xy}$ the measure of locality $\mu_{\scriptscriptstyle L}$ and freedom of choice $\mu_{\scriptscriptstyle F}$ are defined as 
\begin{eqnarray}\label{locality-measure}
\mu_{\scriptscriptstyle L}\ :=\ \min_{\scriptscriptstyle P_{\scriptscriptstyle xy}}\,\max_{\scriptscriptstyle FHV}\sum_{\scriptscriptstyle \lambda\in\Lambda_{L}}P_{\scriptscriptstyle \lambda}\,\,,
\\\label{freedom-measure}
\mu_{\scriptscriptstyle F}\ :=\ \min_{\scriptscriptstyle P_{\scriptscriptstyle xy}}\,\max_{\scriptscriptstyle LHV}\sum_{\scriptscriptstyle \lambda\in\Lambda_{F}}P_{\scriptscriptstyle \lambda}\,\,,
\end{eqnarray}
where the maxima are taken respectively over \underline{all}  hidden variable models with freedom of choice (FHV) or \underline{all} local hidden variable models (LHV) simulating given behaviour $\{P_{\scriptscriptstyle ab|xy}\}_{\scriptscriptstyle xy}$, with a fixed distribution of settings $P_{\scriptscriptstyle xy}$\,, minimized over \underline{any} choice of the latter.
\end{definition}

This definition follows the intuition of, respectively, locality or free choice as properties that can be relaxed only to the extent that is required to maintain the other assumption in every experimental situation (i.e., for any distribution of measurement settings $P_{\scriptscriptstyle xy}$). Formally, the measures $\mu_{\scriptscriptstyle L}$ and $\mu_{\scriptscriptstyle F}$ count the maximal frequency of, respectively, local or free choice events optimised over all protocols simulating $\{P_{\scriptscriptstyle ab|xy}\}_{\scriptscriptstyle xy}$ without violating of the other assumption, cf. \textbf{Remark~\ref{remark-definition}}. The minimum over all $P_{\scriptscriptstyle xy}$ amounts to the worst case scenario, which takes into account the possibility that $P_{\scriptscriptstyle xy}$ is \textit{a priori} unspecified (i.e., this amount of freedom is enough to simulate an experiment with any arbitrary choice of distribution $P_{\scriptscriptstyle xy}$ in compliance with Eq.~(\ref{distribution-settings})).

At first glance, even if conceptually appropriate, such a definition might seem too general to provide a manageable notion, due to the range of experimental scenarios that need to be taken into account (i.e. arbitrariness of $P_{\scriptscriptstyle xy}$). However, the situation considerably simplifies because of the following lemma (see \textbf{Methods} section for further discussion and proof). This lemma also provides additional support for \textbf{Definition~\ref{definition}}.

\begin{lemma}\label{max-lemma}
In both Eqs.~(\ref{locality-measure}) and (\ref{freedom-measure}) in \textbf{Definition~\ref{definition}} the first minimum can be omitted, i.e. we have
\begin{eqnarray}\label{locality-measure-max}
\mu_{\scriptscriptstyle L}\ =\ \max_{\scriptscriptstyle FHV}\sum_{\scriptscriptstyle \lambda\in\Lambda_{L}}P_{\scriptscriptstyle \lambda}\,\,,\\\label{freedom-measure-max}
\mu_{\scriptscriptstyle F}\ =\ \max_{\scriptscriptstyle LHV}\sum_{\scriptscriptstyle \lambda\in\Lambda_{F}}P_{\scriptscriptstyle \lambda}\,\,,
\end{eqnarray}
where the respective maxima are taken for some fixed nontrivial distribution $P_{\scriptscriptstyle xy}$ (i.e., the expression is insensitive to this choice provided all settings are probed, $P_{\scriptscriptstyle xy}\neq0$ for all $x,y$).
\end{lemma}

It is in this way that the present measure of locality $\mu_{\scriptscriptstyle L}$ extends the notion of \textit{local} \textit{fraction}~\cite{ElPoRo92,Ha91,BaKePi06,CoRe08,CoRe16,PoBrGi12} to arbitrary experimental behaviour $\{P_{\scriptscriptstyle ab|xy}\}_{\scriptscriptstyle xy}$. Remarkably, the twin concept, which is the measure of free choice $\mu_{\scriptscriptstyle F}$ has not been considered at all. Perhaps the reason for this omission is the issue of arbitrariness of the distribution $P_{\scriptscriptstyle xy}$, for which there are non-trivial constraints when freedom of choice is violated (note that for the measure $\mu_{\scriptscriptstyle L}$ this problem does not occur). Those concerns can be dismissed only after the proper treatment in \textbf{Lemma~\ref{max-lemma}}. This allows a so defined measure of freedom $\mu_{\scriptscriptstyle F}$ on a par with the more familiar  measure of locality $\mu_{\scriptscriptstyle L}$. 

So far the concepts of violation of locality and freedom of choice, and the corresponding measures $\mu_{\scriptscriptstyle L}$ and $\mu_{\scriptscriptstyle F}$, have been kept separate. This is expected given  their disparate character. First, each concept plays a different role in the description of an experiment and hence offers a different explanation for any observed correlations, this is, direct influence (communication during the experiment) vs measurement dependence (employing common past for rigging measurement settings). Second, on the level of causal modelling those assumptions are expressed differently, Eq.~(\ref{factorisation}) vs Eq.~(\ref{free-choice}). Third, violating free choice gives rise to subtle issues regarding constraints on the distribution of settings $P_{\scriptscriptstyle xy}$ (as noted, these concerns are addressed in \textbf{Lemma~\ref{max-lemma}}).

Having brought all those issues to the spotlight, it is surprising that the assumption of locality and free choice are intrinsically connected. We now present the key result in this paper showing the exchangeability of both concepts, while maintaining the same degree of locality and freedom of choice  so defined. It holds for any number of settings $x,y\in\mathfrak{M}=\{0,1,...\,,M\}$ (see \textbf{Methods} for the proof).

\begin{theorem}\label{Theorem-equivalence}
For a given behaviour $\{P_{\scriptscriptstyle ab|xy}\}_{\scriptscriptstyle xy}$ the degree of locality and freedom of choice are the same, i.e. both measures in \textbf{Definition~\ref{definition}} coincide $\mu_{\scriptscriptstyle L}=\mu_{\scriptscriptstyle F}$.
\end{theorem}

This is a general structural theorem about causal modelling of a given behaviour $\{P_{\scriptscriptstyle ab|xy}\}_{\scriptscriptstyle xy}$. It means that the resources measured by the frequency of causal interventions of either sort, required to explain an experimental statistics, are equally costly. Thus, as far as the statistics is concerned,  causal explanations resorting either to violation of locality or free choice (or measurement dependence) should be kept on an equal footing. Preference should be guided by a better understanding of a particular situation (design of the experiment as well as ontological commitments in its description).

Let us emphasise two features of \textbf{Theorem~\ref{Theorem-equivalence}}. First, this is a \textit{theory-independent} result in the sense that it applies directly to experimental statistics irrespective of the design or theoretical framework behind the experiment (with the quantum predictions being just one example). Second, the connection between those two seemingly disparate quantities $\mu_{\scriptscriptstyle L}$ and $\mu_{\scriptscriptstyle F}$ has a practical advantage: knowledge of one suffices to compute the other. Both features are illustrated by the following results.

\subsection*{Non-signalling behaviour with binary settings}

Consider the case of Bell's experiment with only two measurement settings on each side $x,y\in\mathfrak{M}=\{0,1\}$.
Let us recall that \textit{non-signalling} of some given behaviour $\{P_{\scriptscriptstyle  ab|xy}\}_{\scriptscriptstyle  xy}$ means that Alice \textit{cannot} infer Bob's measurement setting (whether it is $y=0$ or $1$) from the statistics on her side alone, i.e.
\begin{eqnarray}\label{non-signalling-Alice}
P_{\scriptscriptstyle a|x0}\ =\ \sum_{\scriptscriptstyle b}P_{\scriptscriptstyle ab|x0}\ =\ \sum_{\scriptscriptstyle b}P_{\scriptscriptstyle ab|x1}\ =\ P_{\scriptscriptstyle a|x1}\quad\text{ for all $a,x$}\,,
\end{eqnarray}
and similarly on Bob's side (whether Alice chooses $x=0$ or $1$), i.e.
\begin{eqnarray}\label{non-signalling-Bob}
P_{\scriptscriptstyle b|0y}\ =\ \sum_{\scriptscriptstyle a}P_{\scriptscriptstyle ab|0y}\ =\ \sum_{\scriptscriptstyle a}P_{\scriptscriptstyle ab|1y}\ =\ P_{\scriptscriptstyle b|1y}\quad\text{ for all $b,y$}\,.
\end{eqnarray}

Now we can state another result which explicitly computes both measures $\mu_{\scriptscriptstyle L}$ and $\mu_{\scriptscriptstyle F}$ in a surprisingly simple form (see \textbf{Methods} for the proof).
\begin{theorem}\label{Theorem-CHSH}
For a given non-signalling behaviour $\{P_{\scriptscriptstyle ab|xy}\}_{\scriptscriptstyle xy}$ with binary settings $x,y\in\mathfrak{M}=\{0,1\}$ both measures of locality $\mu_{\scriptscriptstyle L}$ and free choice $\mu_{\scriptscriptstyle F}$ from \textbf{Definition~\ref{definition}} are equal to 
\begin{eqnarray}\label{Theorem-CHSH-Equation}
\mu_{\scriptscriptstyle L}\ =\ \mu_{\scriptscriptstyle F}\ =\ \left\{\begin{array}{lll}\tfrac{1}{2}(4-S_{\scriptscriptstyle max})\,,\ &&\text{if $S_{\scriptscriptstyle max}>2$\,,}\vspace{0.15cm}\\
1\,,&&\text{otherwise\,,}\end{array}\right.
\end{eqnarray}
where $S_{\scriptscriptstyle max}=\max\,\{|S_{\scriptscriptstyle i}|:i=1,...\,,4\}$ is the maximum absolute value of the four CHSH expressions in Eqs.~(\ref{S1})-(\ref{S4}).
\end{theorem}

We thus obtain a systematic method for assessing the degree of locality and free choice directly from the observed statistics $\{P_{\scriptscriptstyle ab|xy}\}_{\scriptscriptstyle xy}$ without reference to the specifics of the experiment (the only requirement is non-signalling of the observed distributions). In this sense, this is a general \textit{theory-independent} statement. 

Overall, \textbf{Theorem~\ref{Theorem-CHSH}} allows an interpretation of the amount of violation of the CHSH inequalities in Bell-type experiments as a fraction of trials violating locality (granted freedom of choice) or equivalently trials without freedom of choice (given locality).

\subsection*{The quantum case: Binary settings and beyond}

Let us restrict our attention to the special case of the quantum statistics.
Notably, various aspects of non-locality have been extensively researched in relation to the quantum-mechanical predictions, see~\cite{BrCaPiScWe14,Sc19} for a review. This includes the notion of \textit{local fraction}~\cite{ElPoRo92,Ha91,BaKePi06,CoRe08,CoRe16,PoBrGi12}, which is the same as measure $\mu_{\scriptscriptstyle L}$ here defined for a general behaviour $\{P_{\scriptscriptstyle  ab|xy}\}_{\scriptscriptstyle  xy}$. As noted, it may be thus surprising that the equally natural measure of freedom $\mu_{\scriptscriptstyle F}$ has not been explored. \textbf{Theorem~\ref{Theorem-equivalence}} bridges the gap between those two seemingly disparate notions: there is no actual need for separate study. We next review some crucial results for the \textit{local fraction} in the quantum-mechanical framework, which allows us to make similar statements for the measure of freedom $\mu_{\scriptscriptstyle F}$.

We first observe that \textbf{Theorem~\ref{Theorem-CHSH}} can be readily applied to the quantum-mechanical statistics (where non-signalling holds).  In a Bell experiment, quantum probabilities obtain through the standard formula $P_{\scriptscriptstyle ab|xy}=Tr\,[\,\rho\,\mathbb{P}_{\scriptscriptstyle x}^{\scriptscriptstyle a}\otimes\mathbb{P}_{\scriptscriptstyle y}^{\scriptscriptstyle b}\,]$ where $\rho$ is a (bipartite) mixed state with two PVMs $\{\mathbb{P}_{\scriptscriptstyle x}^{\scriptscriptstyle a=\pm1}\}$ and $\{\mathbb{P}_{\scriptscriptstyle y}^{\scriptscriptstyle b=\pm1}\}$ representing Alice and Bob's choice of measurement settings $x,y\in\mathfrak{M}=\{0,1\}$. Calculating the CHSH expressions Eqs.~(\ref{S1})-(\ref{S4}) in each particular case is straightforward, which gives explicitly the expression for both measures $\mu_{\scriptscriptstyle L}$ and $\mu_{\scriptscriptstyle F}$ via Eq.~(\ref{Theorem-CHSH-Equation}). The result of special significance concerns the famous Tsirelson bound $S_{\scriptscriptstyle max}^{\scriptscriptstyle  QM}=2\sqrt{2}$ for the maximal violation of the CHSH inequalities in quantum mechanics~\cite{Ts80}. By virtue of  \textbf{Theorem~\ref{Theorem-CHSH}}, this means that  in order to locally recover the quantum predictions in a Bell experiment with two settings, Alice and Bob can enjoy freedom of choice in the worst case, at most, with a fraction $\mu_{\scriptscriptstyle F}=2-\sqrt{2}\approx0.59$ of all trials (corresponding to the choice of measurements on a maximally entangled state that saturate the Tsirelson bound). Clearly, the same applies to local fraction $\mu_{\scriptscriptstyle L}$ in a two-setting scenario.

Interestingly, relaxing the constraint on the number of settings for Alice and Bob's measurements $x,y\in\mathfrak{M}=\{1,\,2,\,3,...\,,M\}$ the quantum statistics forces us to further constrain, respectively, locality or free choice. The case of local fraction $\mu_{\scriptscriptstyle L}$ with arbitrary number of settings $M\rightarrow\infty$ has been thoroughly investigated  for statistics generated by quantum states. Let us refer to two interesting results in the literature on local fraction $\mu_{\scriptscriptstyle L}$ which readily translate via \textbf{Theorem~\ref{Theorem-equivalence}} to the measure of freedom $\mu_{\scriptscriptstyle F}$. The first one concerns the statistics of a maximally entangled state, cf.~\cite{ElPoRo92,BaKePi06} (see \textbf{SI Appendix} for a direct proof).
\begin{theorem}\label{Theorem-ChainedBell}
For every local hidden variable (LHV) model that explains the statistics of a Bell experiment for a maximally entangled state the amount of free choice tends to zero with increasing number of measurement settings $M$, i.e. $\mu_{\scriptscriptstyle F}\xymatrix{\ar[r]_{ M\,\rightarrow\,\infty\atop} &}
0\,$\,.\vspace{-0.3cm}
\end{theorem}

Apparently, for less entangled states the amount of freedom increases, reaching the maximal value $\mu_{\scriptscriptstyle F}=1$ for separable states. This is a consequence of the result in~\cite{PoBrGi12}, which explicitly computes the local fraction $\mu_{\scriptscriptstyle L}$ for all pure two-qubit states. Stated for measure $\mu_{\scriptscriptstyle F}$ this takes the following form.
\begin{theorem}\label{Theorem-Portmann}
For a pure two-qubit state, which by appropriate choice of the basis can always be written in the form $\ket{\psi}=\cos\tfrac{\theta}{2}\ket{00}+\sin\tfrac{\theta}{2}\ket{11}$ with $\theta\in[0,\tfrac{\pi}{2}]$, the amount of freedom is equal $\mu_{\scriptscriptstyle F}=\cos\,\theta$, whatever the choice and number of settings on Alice and Bob's side.
\end{theorem}
Note that both \textbf{Theorem~\ref{Theorem-ChainedBell}} and \textbf{Theorem~\ref{Theorem-Portmann}} assume a specific form of behaviour $\{P_{\scriptscriptstyle ab|xy}\}_{\scriptscriptstyle xy}$ as obtained by the rules of quantum theory. The theorems should be contrasted with  \textbf{Theorem~\ref{Theorem-CHSH}} which is a \textit{theory-independent} statement,  not limited to a particular theoretical framework.

\section*{Discussion}

The ingenuity of Bell's theorem lies in the fundamental nature of the premises from which the result is derived. Within the standard causal (or \textit{realist}) approach, it is hard to assume less about two agents than having \textit{free choice} and their systems being \textit{localised} in space. Yet in some experiments nature refutes the possibility that both assumptions are concurrently true~\cite{AsDaRo82,GiVeWeHaHoPhSt15,ShMeChBiWaStGe15,HeBeDrReKaBlRu15,As15,GaFrKa14,RaHaHoGaFrLeLi18,AbAcAlAlAnAnBe18}. It is not easy to reject either one of them without carefully rethinking the role of observers and how cause-and-effect manifests in the world.\footnote{We note that the conventional understanding of causality and the language of counterfactuals has recently gained a solid mathematical basis; see e.g. the work of J.~Pearl~\cite{Pe09}. However, in view of the apparent difficulties with embedding quantum mechanics in that framework, the standard approach to causality based on Reichenbach's principle or claims regarding spatio-temporal structure of events might need reassessment; see e.g. indefinite causal structures~\cite{Br14,AlBaHoLeSp17} or retrocausality~\cite{WhAr20}.} Our objective in this paper is this: \textit{instead of pondering the question of how this could be possible, we ask about the extent to which a given assumption has to be relaxed in order to maintain the other}. Expressed more colloquially, it is natural for a realist to ask what is the cost of trading one concept for the other: \textit{Is it possible to save free choice by giving up on \textit{some} locality? Or, maybe is it better to forego a \textit{modicum} of free choice in exchange for locality?} These questions can be compared on equal footing by computing a proportion of trials across the whole experimental ensemble in which a given assumption must fail, when the other holds at all times. Surprisingly, the answer can be obtained by looking at the observed statistics alone (avoiding the specifics of the experimental setup). The first question was formulated in the quantum-mechanical context  by Elitzur, Popescu and Rohrlich~\cite{ElPoRo92} who introduced the notion of \textit{local fraction} further elaborated in~\cite{BaKePi06,CoRe08,CoRe16,PoBrGi12}  (see~\cite{Ha91} for an early indication of these ideas). Here, we generalise this notion to arbitrary experimental statistics (see also~\cite{AbBaMa17}). Furthermore, we answer the second question by adopting a similar approach to measuring the amount of free choice (which by analogy may be called \textit{free fraction}). The first main result, \textbf{Theorem~\ref{Theorem-equivalence}}, compares such defined measures in the general case (arbitrary statistics with any number of settings), showing that both assumptions are \textit{equally costly}. This demonstrates a deeper symmetry between locality and free choice, which may come as a surprise, given our intuition of a profound difference in the role these concepts play in the description of an experiment. 

In this paper, the notions of locality and free choice are understood in the usual sense required to derive Bell's theorem~\cite{Be87a,CoRe13a}. They are expressed  in the standard causal model framework (which subsumes realism)
as unambiguous yes-no criteria for each experimental trial (i.e. when all past variables are fixed), determining whether there is a causal link between certain variables in a model (without pondering its exact nature). The measures $\mu_{\scriptscriptstyle L}$ and $\mu_{\scriptscriptstyle F}$ count the fraction of trials when such a connection needs to be established, breaking locality or free choice respectively, in order to explain the observed statistics.
This problem is prior to a discussion of how this actually occurs, which is  particularly relevant when the exact nature of the phenomenon under study is obscured.  \textbf{Theorem~\ref{Theorem-equivalence}} shows no intrinsic reason for a realist to favour one assumption vs the other. The minimal frequency of the required causal influences of either sort, measured by  $\mu_{\scriptscriptstyle L}$ and $\mu_{\scriptscriptstyle F}$, is exactly the same. Notably, this is a general result which holds for \textit{any} behaviour $\{P_{\scriptscriptstyle  ab|xy}\}_{\scriptscriptstyle  xy}$. What remains is explicit calculation of those measures for a given experimental statistics. 

The second main result, \textbf{Theorem~\ref{Theorem-CHSH}}, evaluates both measures $\mu_{\scriptscriptstyle L}$ and $\mu_{\scriptscriptstyle F}$ for any \textit{non-signalling} behaviour in a Bell experiment with two outcomes and two settings. It provides a direct interpretation to the amount of violation of the CHSH inequalities~\cite{ClHoShHo69}. The key motivation behind this result is that the degree by which the inequalities are violated has not been given tangible interpretation so far, beyond its use as a binary test of whether the inequalities are obeyed or not in study of Bell non-locality. Furthermore, \textbf{Theorem~\ref{Theorem-CHSH}} has the advantage of being \textit{theory-independent} in the sense of being applicable to the experimental statistics regardless of its theoretical origin (i.e., beyond the quantum-mechanical  framework). This makes it suitable for quantitative assessment of the degree of locality and free choice across different experimental situations, with prospective applications beyond physics, e.g. in neuroscience, cognitive psychology, social sciences or finance~\cite{BuBr12,HaKh13,PoBu13,Ha17a,Ha17}. 

We also state two results,  \textbf{Theorem~\ref{Theorem-ChainedBell}} and \textbf{Theorem~\ref{Theorem-Portmann}}, for the measure of free choice $\mu_{\scriptscriptstyle F}$ in the case of the quantum statistics generated by the pure two-qubit states. Both are direct translation, via \textbf{Theorem~\ref{Theorem-equivalence}}, of the corresponding results for the local fraction $\mu_{\scriptscriptstyle L}$~\cite{ElPoRo92,Ha91,BaKePi06,CoRe08,CoRe16,PoBrGi12}. 

It is worth noting a related idea of quantifying non-locality through the amount of information transmitted between the parties that is required to reproduce quantum correlations (under free choice assumption). Together with the development of the specific models~\cite{Ma92a,BrClTa99,St00,ToBa03,Gi20}, this has led to various results regarding communication complexity in the quantum realm~\cite{BuClMaWo10}. However, in this paper we take a different perspective on measuring non-locality by changing the question from \textit{"how} \textit{much"} to \textit{"how} \textit{often"} communication needs to be established between the parties to simulate given correlations. \textbf{Theorem~\ref{Theorem-CHSH}} gives the exact bound in the case of non-signalling statistics in the two-setting and two-outcome Bell experiment. In the quantum case, such a simulation requires communication in at least 41\,\% of trials (because of Tsirelson's bound~\cite{Ts80}) and for maximally entangled states increases to 100\,\% of trials when the number of settings is arbitrary (cf. \textbf{Theorems~\ref{Theorem-ChainedBell}} and \textbf{\ref{Theorem-Portmann}}).

Natural as it may seem, the idea of measuring freedom of  choice by measure $\mu_{\scriptscriptstyle F}$ has not been developed in the literature. The reason for this omission can be traced to the conceptual and technical issues with handling arbitrariness of the distribution of settings $P_{\scriptscriptstyle xy}$. Those concerns are properly addressed in the present paper with \textbf{Lemma~\ref{max-lemma}}, which considerably simplifies and supports  \textbf{Definition~\ref{definition}}.
We note that various measures have been developed as a means of quantifying freedom of choice (or \textit{measurement independence}, as it is sometimes called). They include maximal distance between distributions~\cite{Ha10,Ha11b}, mutual information~\cite{BaGi11,HaBr20} or measurement dependent locality~\cite{PuRoBaLiGi14,AkTaMaPuThGi15,PuGi16}. Furthermore, some explicit models simulating correlations in a singlet state with various degrees of measurement dependence have been proposed~\cite{Br88,Ha16} and analysed (e.g. see~\cite{HaBr20} for comparison of causal vs retrocausal models). However, these attempts depart from the original understanding of the free choice as introduced by Bell~\cite{Be87a,CoRe13a} (strict independence of choice from anything in the past) in favour of more sophisticated information-theoretic accounts. Notably, the proposed measure of free choice builds on the Bell's original framework assessing the maximal frequency with which such a freedom \textit{can} be retained in a model strictly consistent with locality. It thus benefits from a direct interpretation within the established causal framework of Bell inequalities and has a clear-cut operational meaning.

Regarding \textbf{Theorem~\ref{Theorem-ChainedBell}}, which rules out \textit{any} freedom of choice \textit{so defined}, it is interesting to take an adversarial perspective on the problem of free choice in relation to quantum cryptography and device independent certification~\cite{KoPaBr06,KoHaSePoMaKaSc12}. In this narrative an eavesdropper controls the devices trying to simulate the quantum statistics of a Bell test, which is impossible as long as the parties enjoy freedom of choice. However, any breach of the latter, i.e. control of measurement settings, shifts the balance in favour of the eavesdropper in her malicious task. Taking the view that any causal influence comes with a cost or danger of being uncovered there are two diverging strategies that reduce the cost/risk to be considered: \textit{(a)} resort to the use of control of choice as seldom as possible during the experiment, or \textit{(b)} minimise the intensity of each act of control. \textbf{Theorem~\ref{Theorem-ChainedBell}} completely rules out the first possibility when simulating quantum statistics, i.e., the eavesdropper needs to manipulate both settings on each trial in order to simulate the quantum statistics. The question about the intensity of the control is left open in our discussion, but amenable to information-theoretic methods~\cite{Ha10,Ha16,Ha10,Ha11b,BaGi11,PuRoBaLiGi14,AkTaMaPuThGi15,PuGi16,HaBr20}. This gives additional security criteria for quantum cryptography and device independent certification by forcing the eavesdropper to a more challenging sort of attack (not only can she not miss a trial, but the control has to be subtle enough). 

We remark that the main \textbf{Theorem~\ref{Theorem-equivalence}} readily extends to the case of larger number of parties and outcomes $\{P_{\scriptscriptstyle  abc...|xyz...}\}_{\scriptscriptstyle xyz...}$. This should be also possible for \textbf{Theorem~\ref{Theorem-CHSH}} when characterisation of the local polytope is known, cf.~\cite{SuZa81,GaMe84,ZuBr02,KlCaBiSh08,BaLiMaPiPoRo05,BaPi05,JoMa05}. Yet another valuable avenue for research in that case consists of completing the analysis to include signalling scenarios~\cite{KuDzLa15,DzKuLa15}. As for the quantum case, we considered the simplest Bell-type scenario with two parties involved in the experiment, but extensions may prove even more surprising (see~\cite{Sc19} for a  technical review of the vast field of Bell non-locality). In particular, in three-party scenarios the methods discussed presently can be used to eliminate freedom of choice already for two settings per party sharing the GHZ state (cf. Mermin inequalities which saturate in that case~\cite{Me90a}). 
We should also mention an intriguing result~\cite{ReBaBoBrGiBe19} for a triangle quantum network in which non-locality can be proved with all measurements fixed. Remarkably, there is nothing to choose in that setup, but there is another assumption of preparation independence which plays a crucial role in the argument.

In this paper we are trying to remain impartial as to which assumption --- \textit{locality} or \textit{free choice} --- is more important on the fundamental level. This is certainly a strongly debated subject in general, both  among physicists and philosophers, with strong supporters on each side~\cite{Ma19,La19,No17}. As just one example depreciating the role of freedom of choice let us quote Albert~Einstein\footnote{Statement to the Spinoza Society of America. September 22, 1932. AEA 33-291.}: \textit{"Human beings, in their thinking, feeling and acting are not free agents but are as causally bound as the stars in their motion."} As a counterbalance, it is hard to resist the objection that was eloquently stated by Nicolas~Gisin~\cite{Gi14b}: \textit{"But for me, the situation is very clear: not only does free will exist, but it is a prerequisite for science, philosophy, and our very ability to think rationally in a meaningful way."} Without entering into this debate, we remark that both assumptions are interchangeable on a deeper level. Namely, for a given experimental statistics $\{P_{\scriptscriptstyle ab|xy}\}_{\scriptscriptstyle xy}$ in a Bell-type experiment the measure of locality $\mu_{\scriptscriptstyle L}$ and measure of free choice $\mu_{\scriptscriptstyle F}$ are exactly the same. This makes an even stronger case regarding the inherent impossibility of inferring causal structure from experimental statistics alone.

\matmethods{In order to facilitate the following discussion we begin with two technical lemmas. See \textbf{SI Appendix} for the proofs.

The first one holds for a Bell experiment with arbitrary number of settings $x,y\in\mathfrak{M}=\{1,\,2,\,3,...\,,M\}$.

\begin{lemma}\label{Lemma-dilation}
For any behaviour $\{{P}_{\scriptscriptstyle ab|xy}\}_{\scriptscriptstyle xy}$ and distribution of settings $P_{\scriptscriptstyle xy}$ there exists a local hidden variable model (LHV) which \underline{fully} violates the freedom of choice assumption.
[i.e. if $\tilde{\Lambda}$ is the relevant HV space, then we have $\tilde{\Lambda}=\tilde{\Lambda}_{\scriptscriptstyle L}=\tilde{\Lambda}_{\scriptscriptstyle NF}$, cf. Eqs.~(\ref{locality-splitting}) and (\ref{freedom-splitting}))].
\end{lemma}

The second one concerns a Bell scenario with binary settings $x,y\in\mathfrak{M}=\{0,\,1\}$. 
\begin{lemma}\label{Lemma-decomposition}
Each non-signalling behaviour $\{P_{\scriptscriptstyle ab|xy}\}_{\scriptscriptstyle xy}$ with binary settings $x,y\in\mathfrak{M}=\{0,\,1\}$ can be decomposed as a convex mixture of a local behaviour $\{\bar{P}_{\scriptscriptstyle ab|xy}\}_{\scriptscriptstyle xy}$ and a PR-box $\{\tilde{P}_{\scriptscriptstyle ab|xy}\}_{\scriptscriptstyle xy}$ in the form
\begin{eqnarray}\label{decomposition}
P_{\scriptscriptstyle ab|xy}\ =\ p\cdot \bar{P}_{\scriptscriptstyle ab|xy}+(1-p)\cdot \tilde{P}_{\scriptscriptstyle ab|xy}\,,
\end{eqnarray}
with $p=\tfrac{1}{2}(4-S_{\scriptscriptstyle max})$ for all $x,y\in\{0,1\}$.
\end{lemma}
\noindent Recall that a PR-box~\cite{PoRo94} is a non-signalling behaviour for which one of the CHSH expressions in Eqs.~(\ref{S1})-(\ref{S4}) reaches the maximal algebraic bound of $|S_{\scriptscriptstyle i}|=4$. Here, local behaviour means existence of a LHV+FHV model of $\{\bar{P}_{\scriptscriptstyle ab|xy}\}_{\scriptscriptstyle xy}$ and $S_{\scriptscriptstyle max}=\max\,\{|S_{\scriptscriptstyle i}|:i=1,...\,,4\}$.

We are now ready to proceed with the proofs.

\subsection*{Proof of Lemma~{\ref{max-lemma}}}
Suppose we have a HV model (\ref{HV}) of some behaviour $\{P_{\scriptscriptstyle ab|xy}\}_{\scriptscriptstyle xy}$ for some nontrivial distribution of settings  ${P}_{\scriptscriptstyle xy}$. The latter obtains via Eq.~(\ref{distribution-settings}) from the conditional probabilities ${P}_{\scriptscriptstyle xy|\lambda}$ which are related to probabilities specified by the model, ${P}_{\scriptscriptstyle \lambda|xy}$ and ${P}_{\scriptscriptstyle \lambda}$\,, by the usual Bayes' rule. The point at issue is whether a given HV model can simulate any other distribution of settings $\tilde{P}_{\scriptscriptstyle xy}$ via Eq.~(\ref{distribution-settings}) by changing ${P}_{\scriptscriptstyle xy|\lambda}\leadsto\tilde{P}_{\scriptscriptstyle xy|\lambda}$\,, while keeping the remaining components of the HV model (\ref{HV}) intact. This requires consistency with Bayes' rule, i.e.
\begin{eqnarray}\label{Bayes}
\tilde{P}_{\scriptscriptstyle xy|\lambda}\ =\ \tfrac{{P}_{\scriptscriptstyle \lambda|xy}\cdot\,\tilde{P}_{xy}}{{P}_{\scriptscriptstyle \lambda}}\,,
\end{eqnarray}
which should be a well-defined probability distribution for each $\lambda$. Since distributions ${P}_{\scriptscriptstyle \lambda|xy}$ and ${P}_{\scriptscriptstyle \lambda}$ are fixed by the HV model (\ref{HV}), then the distribution of settings $\tilde{P}_{\scriptscriptstyle xy}$ is arbitrary as long as the expression in Eq.~(\ref{Bayes}) is less then 1 for each $\lambda\in\Lambda$ (normalisation is trivially fulfilled). Now, whenever freedom of choice from Eq.~(\ref{free-choice}) holds, this condition is always satisfied, and hence such a HV model can be trivially adjusted for any distribution $\tilde{P}_{\scriptscriptstyle xy}$ (by redefining $\tilde{P}_{\scriptscriptstyle xy|\lambda}:=\tilde{P}_{\scriptscriptstyle xy}$ in compliance with Eq.~(\ref{Bayes}), and keeping all the remaining components of the HV model (\ref{HV}) unchanged). Of course, for FHV models in the definition of $\mu_{\scriptscriptstyle L}$ in Eq.~(\ref{locality-measure}) this is the case, which thus entails the simpler expression for $\mu_{\scriptscriptstyle L}$ in Eq.~(\ref{locality-measure-max}).

Clearly, such a simple argument falls apart for models without freedom of choice, like those in the definition of $\mu_{\scriptscriptstyle F}$ in Eq.~(\ref{freedom-measure}), when  ${P}_{\scriptscriptstyle \lambda|xy}$ and ${P}_{\scriptscriptstyle \lambda}$ \textit{do not} cancel out and the probability in Eq.~(\ref{Bayes}) may be ill-defined. In that case, some deeper intervention into the model is required as shown below. 

Let us take some LHV model (\ref{HV}) simulating a given behaviour $\{P_{\scriptscriptstyle ab|xy}\}_{\scriptscriptstyle xy}$ with nontrivial distribution of settings $P_{\scriptscriptstyle xy}$. Then the related HV space decomposes as $\Lambda=\Lambda_{\scriptscriptstyle F}\uplus\Lambda_{\scriptscriptstyle NF}$ and the degree of freedom is measured by $p_{\scriptscriptstyle F}:=\sum_{\scriptscriptstyle \lambda\in\Lambda_{F}}P_{\scriptscriptstyle \lambda}$\,, cf. \textbf{Remark~\ref{remark-definition}}. Now, consider a restriction of the model to the respective subspaces $\Lambda_{\scriptscriptstyle F}$ and $\Lambda_{\scriptscriptstyle NF}$ which amounts to the following rescaling 
\begin{eqnarray}
P_{\scriptscriptstyle \lambda}^{\scriptscriptstyle F}\ :=\ \tfrac{1}{p_{\scriptscriptstyle F}}\,P_{\scriptscriptstyle \lambda}\,,\quad P_{\scriptscriptstyle \lambda|xy}^{\scriptscriptstyle F}\ :=\ \tfrac{1}{p_{\scriptscriptstyle F}}\,P_{\scriptscriptstyle \lambda|xy}\,,\quad P_{\scriptscriptstyle ab|xy\lambda}^{\scriptscriptstyle F}\ :=\ P_{\scriptscriptstyle ab|xy\lambda}\,,
\end{eqnarray}
for $\lambda\in\Lambda_{\scriptscriptstyle F}$, and similarly
\begin{eqnarray}
P_{\scriptscriptstyle \lambda}^{\scriptscriptstyle NF}\ :=\ \tfrac{1}{1-p_{\scriptscriptstyle F}}\,P_{\scriptscriptstyle \lambda}\,,\quad P_{\scriptscriptstyle \lambda|xy}^{\scriptscriptstyle NF}\ :=\ \tfrac{1}{1-p_{\scriptscriptstyle F}}\,P_{\scriptscriptstyle \lambda|xy}\,,\quad P_{\scriptscriptstyle ab|xy\lambda}^{\scriptscriptstyle NF}\ :=\ P_{\scriptscriptstyle ab|xy\lambda}\,,
\end{eqnarray}
for $\lambda\in\Lambda_{\scriptscriptstyle NF}$. Both are LHV models with marginals 
\begin{eqnarray}\label{Pab|xy-F}
P_{\scriptscriptstyle ab|xy}^{\scriptscriptstyle F}\ =\ \sum_{\scriptscriptstyle \lambda\in\Lambda_{\scriptscriptstyle F}}\,P_{\scriptscriptstyle ab|xy\lambda}^{\scriptscriptstyle F}\cdot P_{\scriptscriptstyle \lambda|xy}^{\scriptscriptstyle F}\,,
\\\label{Pab|xy-NF}
P_{\scriptscriptstyle ab|xy}^{\scriptscriptstyle NF}\ =\ \sum_{\scriptscriptstyle \lambda\in\Lambda_{\scriptscriptstyle NF}}\!P_{\scriptscriptstyle ab|xy\lambda}^{\scriptscriptstyle NF}\cdot P_{\scriptscriptstyle \lambda|xy}^{\scriptscriptstyle NF}\,,
\end{eqnarray}
which provide a convex decomposition of the original behaviour  $\{P_{\scriptscriptstyle ab|xy}\}_{\scriptscriptstyle xy}$\,, i.e.
\begin{eqnarray}\label{Pab|xy-convex}
P_{\scriptscriptstyle ab|xy}\ =\ p_{\scriptscriptstyle F}\cdot P_{\scriptscriptstyle ab|xy}^{\scriptscriptstyle F}+(1-p_{\scriptscriptstyle F})\cdot P_{\scriptscriptstyle ab|xy}^{\scriptscriptstyle NF}\,.
\end{eqnarray}

The crucial point is a  careful adjustment of these two models to recover some arbitrary distribution of settings $\tilde{P}_{\scriptscriptstyle xy}$, while maintaining the respective marginals Eqs.~(\ref{Pab|xy-F}) and (\ref{Pab|xy-NF}). For the first one (restriction to $\Lambda_{\scriptscriptstyle F}$) the situation is trivial as explained above: since it is a FHV model, then it suffice to redefine $\tilde{P}_{\scriptscriptstyle xy|\lambda}^{\scriptscriptstyle F}:=\tilde{P}_{\scriptscriptstyle xy}$ (in compliance with Eq.~(\ref{Bayes})) and leave all rest intact. As for the second one (restriction to $\Lambda_{\scriptscriptstyle NF}$), we can use \textbf{Lemma~\ref{Lemma-dilation}} for constructing another HV space $\tilde{\Lambda}_{\scriptscriptstyle NF}$ with a LHV model without any free choice, that simulates behaviour $\{P_{\scriptscriptstyle ab|xy}^{\scriptscriptstyle NF}\}_{\scriptscriptstyle xy}$ with the required distribution of settings $\tilde{P}_{\scriptscriptstyle xy}$. Then, such modified models can be stitched back together on the compound HV space $\tilde{\Lambda}:=\Lambda_{\scriptscriptstyle F}\uplus\tilde{\Lambda}_{\scriptscriptstyle NF}$ with respective weights $p_{\scriptscriptstyle F}$ and $1-p_{\scriptscriptstyle F}$. This guarantees reconstruction of the original behaviour  $\{P_{\scriptscriptstyle ab|xy}\}_{\scriptscriptstyle xy}$ (see Eq.~(\ref{Pab|xy-convex})) with the new distribution of settings $\tilde{P}_{\scriptscriptstyle xy}$. The model is local and has the same degree of freedom equal to $p_{\scriptscriptstyle F}$ (the first component has full freedom of choice, while in the second one it is entirely missing).

The above construction shows that for every LHV model of some behaviour $\{P_{\scriptscriptstyle ab|xy}\}_{\scriptscriptstyle xy}$ there is always another one adjusted for any other distribution of settings $\tilde{P}_{\scriptscriptstyle xy}$ with the same degree of freedom. This justifies the simpler expression for $\mu_{\scriptscriptstyle F}$ in Eq.~(\ref{freedom-measure-max}) and hence concludes the proof of \textbf{Lemma~\ref{max-lemma}}.

\subsection*{Proof of Theorem~{\ref{Theorem-equivalence}}}
Note that \textbf{Lemma~\ref{max-lemma}} Eqs.~(\ref{locality-measure-max}) and (\ref{freedom-measure-max}) can be taken as a definition of measures $\mu_{\scriptscriptstyle L}$ and $\mu_{\scriptscriptstyle F}$. This is very convenient, since it allows a discussion free from any concerns about the distribution of settings $P_{xy}$ (this is particularly relevant in the case of $\mu_{\scriptscriptstyle F}$ as explained above).

It is instructive to observe that the calculation of both measures $\mu_{\scriptscriptstyle L}$ and $\mu_{\scriptscriptstyle F}$ can be succinctly formulated as a convex optimisation problem. Suppose, we can decompose some given behaviour $\{P_{\scriptscriptstyle ab|xy}\}_{\scriptscriptstyle xy}$ as a mixture 
\begin{eqnarray}\label{convex-locality}
P_{\scriptscriptstyle ab|xy}\ =\ p_{\scriptscriptstyle L}\cdot P_{\scriptscriptstyle ab|xy}^{\scriptscriptstyle L}+(1-p_{\scriptscriptstyle L})\cdot P_{\scriptscriptstyle ab|xy}^{\scriptscriptstyle NL}\,,
\end{eqnarray}
where $\{P_{\scriptscriptstyle ab|xy}^{\scriptscriptstyle L}\}_{\scriptscriptstyle xy}$ is a local behaviour with full freedom of choice (i.e., has a LHV+FHV model), and $\{P_{\scriptscriptstyle ab|xy}^{\scriptscriptstyle NL}\}_{\scriptscriptstyle xy}$ is a free behaviour (i.e., has a FHV model). And similarly, suppose that
\begin{eqnarray}\label{convex-freedom}
P_{\scriptscriptstyle ab|xy}\ =\ p_{\scriptscriptstyle F}\cdot P_{\scriptscriptstyle ab|xy}^{\scriptscriptstyle F}+(1-p_{\scriptscriptstyle F})\cdot P_{\scriptscriptstyle ab|xy}^{\scriptscriptstyle NF}\,
\end{eqnarray}
where $\{P_{\scriptscriptstyle ab|xy}^{\scriptscriptstyle F}\}_{\scriptscriptstyle xy}$ is a local behaviour with full freedom of choice (i.e., has a LHV+FHV model), and $\{P_{\scriptscriptstyle ab|xy}^{\scriptscriptstyle NF}\}_{\scriptscriptstyle xy}$ is a local behaviour (i.e., has a LHV model). In both cases we assume that $0\leqslant p_{\scriptscriptstyle L}\,,p_{\scriptscriptstyle F}\leqslant1$, and both Eq.~(\ref{convex-locality}) and Eq.~(\ref{convex-freedom}) have to hold for all $a,b=\pm1$ and $x,y\in\mathfrak{M}$. Then, we have 

\begin{remark}\label{remark-fraction}
Measures $\mu_{\scriptscriptstyle L}$ and $\mu_{\scriptscriptstyle F}$ evaluate the maxima over all possible decompositions in Eqs.~(\ref{convex-locality}) and (\ref{convex-freedom}) of behaviour $\{P_{\scriptscriptstyle ab|xy}\}_{\scriptscriptstyle xy}$, i.e. 
\begin{eqnarray}\label{max-convex-locality}
\mu_{\scriptscriptstyle L}\ =\ \max_{\scriptscriptstyle \text{decomp. (\ref{convex-locality})}}\ p_{\scriptscriptstyle L}\,,\\\label{max-convex-freedom}
\mu_{\scriptscriptstyle F}\ =\ \max_{\scriptscriptstyle \text{decomp. (\ref{convex-freedom})}}\ p_{\scriptscriptstyle F}\,.
\end{eqnarray}
\end{remark}
\begin{proof} We will justify only Eq.~(\ref{max-convex-locality}), since the argument for Eq.~(\ref{max-convex-freedom}) is analogous.

Let us observe that every HV model (\ref{HV}) of behaviour $\{P_{\scriptscriptstyle ab|xy}\}_{\scriptscriptstyle xy}$ as described by Eq.~(\ref{Pab|xy}) splits into two components (cf. Eq.~(\ref{locality-splitting}))
\begin{eqnarray}\label{P=Non-Local}
P_{\scriptscriptstyle ab|xy}\ =\ \underbrace{\sum_{\scriptscriptstyle \lambda\in\Lambda_{\scriptscriptstyle L}}P_{\scriptscriptstyle ab|xy\lambda}\cdot P_{\scriptscriptstyle \lambda}}_{p_{\scriptscriptstyle L}\,\cdot\, P_{\scriptscriptstyle ab|xy}^{\scriptscriptstyle L}}\ +\underbrace{\sum_{\scriptscriptstyle \lambda\in\Lambda_{\scriptscriptstyle NL}}P_{\scriptscriptstyle ab|xy\lambda}\cdot P_{\scriptscriptstyle \lambda|xy}}_{(1-p_{\scriptscriptstyle L})\,\cdot\, P_{\scriptscriptstyle ab|xy}^{\scriptscriptstyle NL}}\,\,,
\end{eqnarray}
which defines decomposition of the type in Eq.~(\ref{convex-locality}) with $p_{\scriptscriptstyle L}:=\sum_{\scriptscriptstyle \lambda\in\Lambda_{\scriptscriptstyle L}}P_{\scriptscriptstyle \lambda}$. Therefore, by Eq.~(\ref{locality-measure-max}), we get $\mu_{\scriptscriptstyle L}\leqslant{\max}_{\scriptscriptstyle \text{decomp. (\ref{convex-locality})}}\ p_{\scriptscriptstyle L}$.

To see the reverse, we note that every decomposition of the type in Eq.~(\ref{convex-locality}) implies  existence of a LHV+FHV model of behaviour $\{P_{\scriptscriptstyle ab|xy}^{\scriptscriptstyle L}\}_{\scriptscriptstyle xy}$ on some HV space $\tilde{\Lambda}_{\scriptscriptstyle L}$ and a FHV model of behaviour $\{P_{\scriptscriptstyle ab|xy}^{\scriptscriptstyle NL}\}_{\scriptscriptstyle xy}$ on some HV space $\tilde{\Lambda}_{\scriptscriptstyle NL}$. Those two models, when combined on a compound HV space 
$\Lambda:=\tilde{\Lambda}_{\scriptscriptstyle L}\uplus\tilde{\Lambda}_{\scriptscriptstyle NL}$
with the respective weights $p_{\scriptscriptstyle L}$ and $1-p_{\scriptscriptstyle L}$\,, provide a HV model of behaviour $\{P_{\scriptscriptstyle ab|xy}\}_{\scriptscriptstyle xy}$. Since the local domain of such a model contains $\tilde{\Lambda}_{\scriptscriptstyle L}$, then from Eq.~(\ref{locality-measure-max}) we have $\mu_{\scriptscriptstyle L}\geqslant p_{\scriptscriptstyle L}$\,, which entails $\mu_{\scriptscriptstyle L}\geqslant{\max}_{\scriptscriptstyle \text{decomp. (\ref{convex-locality})}}\ p_{\scriptscriptstyle L}$. This concludes the proof of Eq.~(\ref{max-convex-locality}).
\end{proof}

Now, in order to prove \textbf{Theorem~{\ref{Theorem-equivalence}}} it is enough to show that for every decomposition of the type in Eq.~(\ref{convex-locality}) there exists a decomposition of the type in Eq.~(\ref{convex-freedom}) with the same weight $p_{\scriptscriptstyle L}=p_{\scriptscriptstyle F}$, and vice versa. A closer look at both expressions reveals that behaviours $\{P_{\scriptscriptstyle ab|xy}^{\scriptscriptstyle L}\}_{\scriptscriptstyle xy}$ and $\{P_{\scriptscriptstyle ab|xy}^{\scriptscriptstyle F}\}_{\scriptscriptstyle xy}$ are both local with full freedom of choice (i.e., share the same  LHV+FHV model). Thus, the problem can be reduced to justifying that: \textit{(a)} behaviour $\{P_{\scriptscriptstyle ab|xy}^{\scriptscriptstyle NL}\}_{\scriptscriptstyle xy}$ also has a LHV model (possibly a non-FHV model), and \textit{(b)} behaviour $\{P_{\scriptscriptstyle ab|xy}^{\scriptscriptstyle NF}\}_{\scriptscriptstyle xy}$ also has a FHV model (possibly a non-LHV model).

\noindent\textit{Ad.~(a)}~This readily follows from \textbf{Lemma~\ref{Lemma-dilation}}.

\noindent\textit{Ad.~(b)}~Here, a trivial model will suffice. Let us take $\Lambda:=\{\lambda_o\}$ (a single-element set) with $P_{\scriptscriptstyle {\lambda_o}}\equiv P_{\scriptscriptstyle \lambda_o|xy}:=1$ and conditional distribution defined as $P_{\scriptscriptstyle ab|xy\lambda_o}:=P_{\scriptscriptstyle ab|xy}^{\scriptscriptstyle NF}$. Clearly, it is a FVH model of behaviour $\{P_{\scriptscriptstyle ab|xy}^{\scriptscriptstyle NF}\}_{\scriptscriptstyle xy}$.

Thus, we have shown equivalence of both decompositions Eqs.~(\ref{convex-locality}) and (\ref{convex-freedom}), which, by virtue of \textbf{Remark~\ref{remark-fraction}}, proves \textbf{Theorem~{\ref{Theorem-equivalence}}}.

\subsection*{Proof of Theorem~{\ref{Theorem-CHSH}}}
By virtue of \textbf{Theorem~{\ref{Theorem-equivalence}}} it suffices to prove the result for one of the measures. Let it be measure $\mu_{\scriptscriptstyle L}$ evaluated by means of Eq.~(\ref{max-convex-locality}) in \textbf{Remark~{\ref{remark-fraction}}}.

Consider some arbitrary decomposition Eq.~(\ref{convex-locality}) of behaviour $\{P_{\scriptscriptstyle ab|xy}\}_{\scriptscriptstyle xy}$. Then, by linearity, the four CHSH expressions Eqs.~(\ref{S1})-(\ref{S4}) decompose as well, i.e. we get
\begin{eqnarray}\label{convex-S}
S_{\scriptscriptstyle  i}\ =\ p_{\scriptscriptstyle L}\cdot S_{\scriptscriptstyle  i}^{\scriptscriptstyle L}+(1-p_{\scriptscriptstyle L})\cdot S_{\scriptscriptstyle  i}^{\scriptscriptstyle NL}\,,
\end{eqnarray}
where $S_{\scriptscriptstyle  i}^{\scriptscriptstyle L}$ and $S_{\scriptscriptstyle  i}^{\scriptscriptstyle NL}$ are calculated for the respective behaviours $\{P_{\scriptscriptstyle ab|xy}^{\scriptscriptstyle L}\}_{\scriptscriptstyle xy}$ and $\{P_{\scriptscriptstyle ab|xy}^{\scriptscriptstyle NL}\}_{\scriptscriptstyle xy}$. Since the first one is a local behaviour with full freedom of choice (i.e. having a LHV+FHV model), then from the CHSH inequalities Eq.~(\ref{Bell-CHSH-inequalities}) we have $|S^{\scriptscriptstyle L}_{\scriptscriptstyle i}|\ \leqslant\ 2$. For the second one there is nothing interesting to be said other than noting  the maximal algebraic bound $|S^{\scriptscriptstyle NL}_{\scriptscriptstyle i}|\ \leqslant4$. 
As a consequence, the following inequality obtains
\begin{eqnarray}
|S_{\scriptscriptstyle i}|\ \leqslant\ p_{\scriptscriptstyle L}\cdot 2\,+\,(1-p_{\scriptscriptstyle L})\cdot 4\ =\ 4-2\,p_{\scriptscriptstyle L}\,,
\end{eqnarray}
and we get $p_{\scriptscriptstyle L}\leqslant\tfrac{1}{2}(4-|S_{\scriptscriptstyle i}|)$. Thus, by assumed arbitrariness of decomposition, Eq.~(\ref{convex-locality}) gives the upper bound on expression in Eq.~(\ref{max-convex-locality})
\begin{eqnarray}
\mu_{\scriptscriptstyle{L}}\ \leqslant\ \tfrac{1}{2}(4-|S_{\scriptscriptstyle i}|)\,,
\end{eqnarray}
where  $S_{\scriptscriptstyle max}=\max\,\{|S_{\scriptscriptstyle i}|:i=1,...\,,4\}$. 
By \textbf{Lemma~\ref{Lemma-decomposition}} we conclude that the bound is tight, which ends the proof of \textbf{Theorem~\ref{Theorem-CHSH}}.

}

\showmatmethods{} 

\acknow{We thank R.~Colbeck, J.~Duda, N.~Gisin, M.~Hall, L.~Hardy, D.~Kaiser, M.~Markiewicz, S.~Pironio, D.~Rohrlich and V.~Scarani for helpful comments. PB acknowledges support from the Polish National Agency for Academic Exchange in the Bekker Scholarship Programme. PB and EMP were supported by Office of Naval Research Global grant N62909-19-1-2000.}

\showacknow{} 

\bibliography{CombQuant}

\end{document}




\maketitle

\twocolumn

\SItext

\setcounter{equation}{33}

In this Supplement we prove two technical results \textbf{Lemma~\ref{Lemma-dilation}} and \textbf{Lemma~\ref{Lemma-decomposition}} from the \textbf{Methods} section in the main manuscript. We also give an alternative self-standing proof of \textbf{Theorem~\ref{Theorem-ChainedBell}}. 
\vspace{0.2cm}

\noindent\textit{[The numbering of equations follows the main text.]}
\vspace{0.4cm}

\section*{PROOF OF LEMMA~\ref{Lemma-dilation}}
For reference we quote \textbf{Lemma~\ref{Lemma-dilation}} from the main manuscript that we will shortly prove. Here, the choice of settings is arbitrary $x,y\in\mathfrak{M}=\{1,2\,,\dots,M\}$.
\setcounter{lemma}{1}
\begin{lemma}\label{Lemma-dilation-Supplement}
For any behaviour $\{{P}_{\scriptscriptstyle ab|xy}\}_{\scriptscriptstyle xy}$ and distribution of settings $P_{\scriptscriptstyle xy}$ there exists a local hidden variable model (LHV) which \underline{fully} violates the freedom of choice assumption (i.e., if $\tilde{\Lambda}$ is the relevant HV space, then we have $\tilde{\Lambda}=\tilde{\Lambda}_{\scriptscriptstyle L}=\tilde{\Lambda}_{\scriptscriptstyle NF}$\,; cf. Eqs.~(\ref{locality-splitting}) and (\ref{freedom-splitting})).
\end{lemma}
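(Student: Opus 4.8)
\emph{Setup and strategy.} The plan is to exhibit an explicit deterministic local model whose hidden variable is perfectly correlated with the settings, so that locality holds on the \emph{whole} hidden-variable space while freedom of choice is violated on the \emph{whole} space. Recall that an LHV model is fixed by a joint distribution $P(\lambda,x,y)$ over the hidden variable $\lambda$ and the settings, together with local response rules $P_{a|x\lambda}$ and $P_{b|y\lambda}$, from which the behaviour is recovered as $P_{ab|xy}=\sum_{\lambda}P(\lambda|xy)\,P_{a|x\lambda}\,P_{b|y\lambda}$ and the setting marginal is required to equal the prescribed $P_{xy}$. I will build the model so that every point of its support $\tilde{\Lambda}$ lies simultaneously in the locality part $\tilde{\Lambda}_{L}$ and in the no-freedom part $\tilde{\Lambda}_{NF}$ of the splittings (\ref{locality-splitting}) and (\ref{freedom-splitting}), which is exactly the assertion $\tilde{\Lambda}=\tilde{\Lambda}_{L}=\tilde{\Lambda}_{NF}$.

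\emph{Construction.} First I would let the hidden variable range over quadruples $\lambda=(x',y',a',b')$ encoding a pair of ``recommended'' settings together with a pair of outcomes, and set the joint distribution of $\lambda$ and the realised settings to
\begin{equation*}
P\big((x',y',a',b'),x,y\big)\;=\;P_{xy}\,P_{a'b'|xy}\,\delta_{x,x'}\,\delta_{y,y'}\,,
\end{equation*}
so that the recommended settings always coincide with the realised ones, while the outcome labels are drawn according to the target behaviour. The response rules are taken deterministic and purely local: Alice outputs $a'$ and Bob outputs $b'$, each ignoring its own setting,
\begin{equation*}
P_{a|x\lambda}=\delta_{a,a'}\,,\qquad P_{b|y\lambda}=\delta_{b,b'}\,.
\end{equation*}

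\emph{Verification.} Next I would check reproduction of the data: summing out the outcome labels of $\lambda$ returns the marginal $\sum_{a',b'}P_{xy}P_{a'b'|xy}=P_{xy}$, and conditioning on $(x,y)$ gives $P(\lambda=(x,y,a',b')|xy)=P_{a'b'|xy}$, whence $\sum_{\lambda}P(\lambda|xy)\,\delta_{a,a'}\delta_{b,b'}=P_{ab|xy}$. Locality is then immediate, since each outcome is a function of $\lambda$ alone and the response factorises $P_{ab|xy\lambda}=P_{a|x\lambda}P_{b|y\lambda}$ at every $\lambda$, giving $\tilde{\Lambda}=\tilde{\Lambda}_{L}$. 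Finally, the setting distribution conditioned on the hidden variable, $P(x,y|\lambda)=\delta_{x,x'}\delta_{y,y'}$, is a point mass and hence differs from $P_{xy}$ at every point of the support (for any non-degenerate $P_{xy}$), so freedom of choice fails for all $\lambda$ and $\tilde{\Lambda}=\tilde{\Lambda}_{NF}$.

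\emph{Main obstacle.} The construction itself is elementary; the substantive work will be bookkeeping against the formal definitions. The delicate point is to verify that the \emph{entire} support falls into both $\tilde{\Lambda}_{L}$ and $\tilde{\Lambda}_{NF}$ \emph{as those sets are defined by} (\ref{locality-splitting}) and (\ref{freedom-splitting}), rather than merely that locality holds and freedom fails ``on average''. I also expect to dispose of the degenerate case of a deterministic $P_{xy}$ separately, since there the point-mass conditional can coincide with the marginal; in that case one restricts the hidden variable to the single chosen setting pair and the claim reduces to a triviality.
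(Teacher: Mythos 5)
Your construction is essentially identical to the paper's: the paper's hidden-variable space is a disjoint union over setting pairs $(u,v)$ of four-element outcome spaces, so its $\lambda=(u,v,\alpha,\beta)$ is exactly your quadruple $(x',y',a',b')$, with the same deterministic response rules $\delta_{a,a'}\delta_{b,b'}$, the same joint distribution $P_{a'b'|x'y'}\,P_{x'y'}\,\delta_{x,x'}\delta_{y,y'}$, and the same Bayes-rule verification of both $P_{ab|xy}$ and $P_{xy}$. The only differences are presentational (the paper assembles the model in two stages, first giving LHV models for the constant behaviours $\tilde{P}^{(uv)}_{ab|xy}:=P_{ab|uv}$ and then gluing them), together with your remark on degenerate $P_{xy}$ --- a corner case the paper silently glosses over, since there the point-mass conditional coincides with the marginal and free choice is not formally violated on the support.
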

\begin{proof}
Let us first introduce an auxiliary pair of indices  $uv\in\mathfrak{M}\times\mathfrak{M}$ where $\mathfrak{M}=\{1,2\,,\dots,M\}$ and for each pair $uv$ define a trivial behaviour $\{\tilde{P}^{\scriptscriptstyle (uv)}_{\scriptscriptstyle ab|xy}\}_{\scriptscriptstyle xy}$ which consists of $M\times M$ copies of the same distribution 
\begin{eqnarray}\label{P-trivial}
\tilde{P}^{\scriptscriptstyle (uv)}_{\scriptscriptstyle ab|xy}\ :=\ {P}_{\scriptscriptstyle ab|uv}\quad\text{for all $xy\,\in\mathfrak{M}\times\mathfrak{M}$}\,,
\end{eqnarray}
i.e., this holds independently of the choice of settings $xy$. Then, for each such behaviour there exists an LHV model. To see this, for a given pair of indices $uv$ it is enough to consider a four-element HV space
\begin{eqnarray}
\tilde{\Lambda}^{\!\scriptscriptstyle (uv)}\ :=\ \{(\alpha,\beta):\alpha,\beta=\pm\,1\}\,,
\end{eqnarray}
with probability distributions defined in the following way
\begin{eqnarray}\label{cond}
\tilde{P}^{\scriptscriptstyle (uv)}_{\scriptscriptstyle\lambda}\ :=\ \tilde{P}_{\scriptscriptstyle \alpha\beta|uv}&\ \text{and}\ &\tilde{P}^{\scriptscriptstyle (uv)}_{\scriptscriptstyle ab|xy\lambda}\ :=\ \delta_{\scriptscriptstyle a=\alpha}\cdot\,\delta_{\scriptscriptstyle b=\beta}\,,
\end{eqnarray}
where $\lambda=(\alpha,\beta)\in\tilde{\Lambda}^{\!\scriptscriptstyle (uv)}$. By construction, the locality condition Eq.~(\ref{factorisation}) is satisfied for all $\lambda$, i.e., $\tilde{\Lambda}^{\!\scriptscriptstyle (uv)}=\tilde{\Lambda}^{\!\scriptscriptstyle (uv)}_{\scriptscriptstyle L}$, and the marginals reproduce the behaviour in Eq.~(\ref{P-trivial})
\begin{eqnarray}\label{marginal-free}
\tilde{P}^{\scriptscriptstyle (uv)}_{\scriptscriptstyle ab|xy}\ =\ \sum_{\scriptscriptstyle\lambda\in\tilde{\Lambda}^{\!\scriptscriptstyle (uv)}}\tilde{P}^{\scriptscriptstyle (uv)}_{\scriptscriptstyle ab|xy\lambda}\cdot \tilde{P}^{\scriptscriptstyle (uv)}_{\scriptscriptstyle\lambda}\,.
\end{eqnarray}

Those LVH models of auxiliary behaviours $\{\tilde{P}^{\scriptscriptstyle (uv)}_{\scriptscriptstyle ab|xy}\}_{\scriptscriptstyle xy}$ 
can be used to construct another local model on a compound HV space
\begin{eqnarray}\label{Lambda}
\tilde{\Lambda}\ :=\ \biguplus_{\scriptscriptstyle u,v\,\in\,\mathfrak{M}}\tilde{\Lambda}^{\!\scriptscriptstyle (uv)}\,,
\end{eqnarray}
with conditional probabilities on each component left unchanged, i.e., for each $u,v\in\mathfrak{M}$ we have
\begin{eqnarray}\label{Puv}
\tilde{P}_{\scriptscriptstyle ab|xy\lambda}\ :=\ \tilde{P}^{\scriptscriptstyle (uv)}_{\scriptscriptstyle ab|xy\lambda}\qquad \text{for \ $\lambda\in\tilde{\Lambda}^{\!\scriptscriptstyle (uv)}$\,.}
\end{eqnarray}
Note that in this way the \textit{locality} condition is \underline{retained}, i.e., the factorisation of Eq.~(\ref{factorisation}) holds for every $ \lambda\in\tilde{\Lambda}$ (cf. Eq.~(\ref{cond})), and we have $\tilde{\Lambda}_{\scriptscriptstyle L}=\biguplus_{\scriptscriptstyle u,v\,\in\,\mathfrak{M}}\tilde{\Lambda}^{\!\scriptscriptstyle (uv)}_{\scriptscriptstyle L}=\tilde\Lambda$. Then we impose the conditional distributions $\tilde{P}_{\scriptscriptstyle xy|\lambda}$ in the form
\begin{eqnarray}\label{free-choice-lost}
\tilde{P}_{\scriptscriptstyle xy|\lambda}\ :=\ \delta_{\scriptscriptstyle x=u}\cdot\, \delta_{\scriptscriptstyle y=v}\qquad \text{for \ $\lambda\in\tilde{\Lambda}^{\!\scriptscriptstyle (uv)}$\,.}
\end{eqnarray}
Clearly, this \underline{violates} the \textit{free choice} assumption Eq.~(\ref{free-choice}) for every $ \lambda\in\tilde{\Lambda}$ (since the knowledge of $\lambda$ fully determines the settings $xy$), i.e., we have $\tilde{\Lambda}_{\scriptscriptstyle NF}=\tilde{\Lambda}$. Finally, we define the distribution of $\tilde{P}_{\scriptscriptstyle\lambda}$ on the whole HV space $\tilde{\Lambda}$ by the following condition
\begin{eqnarray}\label{Plambda}
\tilde{P}_{\scriptscriptstyle\lambda}\ :=\ \tilde{P}^{\scriptscriptstyle (uv)}_{\scriptscriptstyle\lambda}\cdot\,P_{\scriptscriptstyle uv}\qquad \text{for \ $\lambda\in\tilde{\Lambda}^{\!\scriptscriptstyle (uv)}$\,,}
\end{eqnarray}
where $P_{\scriptscriptstyle uv}$ is the desired distribution of settings (which for sake of generality is left unspecified). Observe that Eqs.~(\ref{free-choice-lost}) and (\ref{Plambda}) allow us to reverse the conditioning, i.e., Bayes' rule gives
\begin{eqnarray}\label{free-choice-lost-reverse}
\tilde{P}_{\scriptscriptstyle\lambda|xy}\ :=\ \left\{\begin{array}{lll}\tilde{P}^{\scriptscriptstyle (xy)}_{\scriptscriptstyle\lambda}\ &&\text{if \ $\lambda\in\tilde{\Lambda}^{\!\scriptscriptstyle (xy)}$\,,}\\
0&&\text{otherwise\,.}\end{array}\right.
\end{eqnarray}

Having defined in Eqs.~(\ref{Lambda})-(\ref{Plambda}) all components of the HV model (\textit{local} and \underline{fully} violating \textit{freedom of choice}) we need to check whether it correctly reconstructs some given behaviour $\{{P}_{\scriptscriptstyle ab|xy}\}_{\scriptscriptstyle xy}$ and distribution of settings $P_{\scriptscriptstyle xy}$ via Eqs.~(\ref{Pab|xy}) and (\ref{distribution-settings}). The following calculation 
\begin{eqnarray}
\sum_{\scriptscriptstyle\lambda\in\tilde{\Lambda}}\tilde{P}_{\scriptscriptstyle ab|xy\lambda}\cdot\tilde{P}_{\scriptscriptstyle \lambda|xy}&\!\!\stackrel{\scriptscriptstyle (\ref{free-choice-lost-reverse})}{=}\!\!&
\sum_{\scriptscriptstyle\lambda\in\tilde{\Lambda}^{(xy)}}\tilde{P}_{\scriptscriptstyle ab|xy\lambda}\cdot\tilde{P}^{\scriptscriptstyle(xy)}_{\scriptscriptstyle \lambda}\\
&\!\!\stackrel{\scriptscriptstyle (\ref{Puv})}{=}\!\!&\sum_{\scriptscriptstyle\lambda\in\tilde{\Lambda}^{(xy)}}\tilde{P}^{\scriptscriptstyle (xy)}_{\scriptscriptstyle ab|xy\lambda}\cdot\tilde{P}^{\scriptscriptstyle(xy)}_{\scriptscriptstyle \lambda}\\
&\!\!\stackrel{\scriptscriptstyle (\ref{marginal-free})}{=}\!\!&\ \tilde{P}^{\scriptscriptstyle(xy)}_{\scriptscriptstyle ab|xy}
\ \, \stackrel{\scriptscriptstyle (\ref{P-trivial})}{=}\ \,{P}_{\scriptscriptstyle ab|xy}\,,
\end{eqnarray}
confirms the validity of Eq.~(\ref{Pab|xy}). Similarly, we check that 
\begin{eqnarray}
\sum_{\scriptscriptstyle\lambda\in\tilde{\Lambda}}\tilde{P}_{\scriptscriptstyle xy|\lambda}\cdot\tilde{P}_{\scriptscriptstyle \lambda}&\!\!\stackrel{\scriptscriptstyle (\ref{free-choice-lost})}{=}\!\!&\sum_{\scriptscriptstyle\lambda\in\tilde{\Lambda}^{(xy)}}\tilde{P}_{\scriptscriptstyle \lambda}\\
&\!\!\stackrel{\scriptscriptstyle (\ref{Plambda})}{=}\!\!&\sum_{\scriptscriptstyle\lambda\in\tilde{\Lambda}^{(xy)}}\tilde{P}^{\scriptscriptstyle (xy)}_{\scriptscriptstyle\lambda}\cdot P_{\scriptscriptstyle xy}\ \,=\ \,{P}_{\scriptscriptstyle xy}\,,
\end{eqnarray}
in compliance with Eq.~(\ref{distribution-settings}). This concludes the proof.

\end{proof}

\section*{PROOF OF LEMMA~\ref{Lemma-decomposition}}

In the following we consider a Bell experiment with binary settings $x,y\in\mathfrak{M}=\{0\,,1\}$.

\subsection*{Preliminaries}
We begin with a useful observation about the CHSH expressions in Eqs.~(\ref{S1})-(\ref{S4}).
\setcounter{remark}{2}
\begin{remark}\label{Lemma-inequality}
We have $|S_{\scriptscriptstyle i}|+|S_{\scriptscriptstyle j}|\,\leqslant\,4$ whenever $i\neq j$.
\end{remark}
\begin{proof}
This is a direct consequence of a simple algebraic inequality involving the four expressions $S_{\scriptscriptstyle 1},...\,,S_{\scriptscriptstyle 4}$ which follow the pattern of Eqs.~(\ref{S1})-(\ref{S4}), i.e., $S_{\scriptscriptstyle 1}=a+b+c-d$, $S_{\scriptscriptstyle 2}=a+b-c+d$, $S_{\scriptscriptstyle 3}=a-b+c+d$ and $S_{\scriptscriptstyle 4}=-a+b+c+d$. It is straightforward to check that for  $-1\leqslant a,b,c,d\leqslant1$, whenever $i\!\neq\!j$ we have $\pm\,S_{\scriptscriptstyle i}\pm\,S_{\scriptscriptstyle j}\leqslant4$ and $\pm\,S_{\scriptscriptstyle i}\mp S_{\scriptscriptstyle j}\leqslant4$ (since in each case four out of eight terms cancel out). Clearly, in compact form this is equivalent to the inequality $|S_{\scriptscriptstyle i}|+|S_{\scriptscriptstyle j}|\leqslant4$ which concludes the proof.
\end{proof}

\noindent As an immediate consequence of \textbf{Remark~\ref{Lemma-inequality}}, note that \textit{for any behaviour $\{P_{\scriptscriptstyle ab|xy}\}_{\scriptscriptstyle xy}$ at most one out of four CHSH inequalities in Eq.~(\ref{Bell-CHSH-inequalities}) can be violated at the same time}.

Without loss of generality, in the following proof we consider a given behaviour $\{P_{\scriptscriptstyle ab|xy}\}_{\scriptscriptstyle xy}$ with binary choice of settings $x,y\in\mathfrak{M}=\{0,1\}$ for which only the first CHSH inequality is violated, i.e., we assume that
\begin{eqnarray}\label{assumption}
S_{\scriptscriptstyle max}\ =\ S_{\scriptscriptstyle 1}\ >\ 2\,,
\end{eqnarray}
and consequently $|S_{\scriptscriptstyle i}|<2$ for $i=2,3,4$\,. It is straightforward to appreciate that all other cases reduce to this one by relabelling the indices.\footnote{To change signs: $(a\rightarrow-a)\Rightarrow(S_{\scriptscriptstyle i}\rightarrow - S_{\scriptscriptstyle i})$. To switch between expressions: $(x\rightarrow1-x)\Rightarrow (S_{\scriptscriptstyle 1}\leftrightarrow S_{\scriptscriptstyle 3})$, $(y\rightarrow1-y)\Rightarrow (S_{\scriptscriptstyle 1}\leftrightarrow S_{\scriptscriptstyle 2})$ and $(x\rightarrow1-x\ \&\ y\rightarrow1-y)\Rightarrow (S_{\scriptscriptstyle 1}\leftrightarrow S_{\scriptscriptstyle 4})$.}

\subsection*{Decomposition of Lemma~\ref{Lemma-decomposition} (proof)}

\begin{figure}[t]
\centering
\includegraphics[width=\columnwidth]{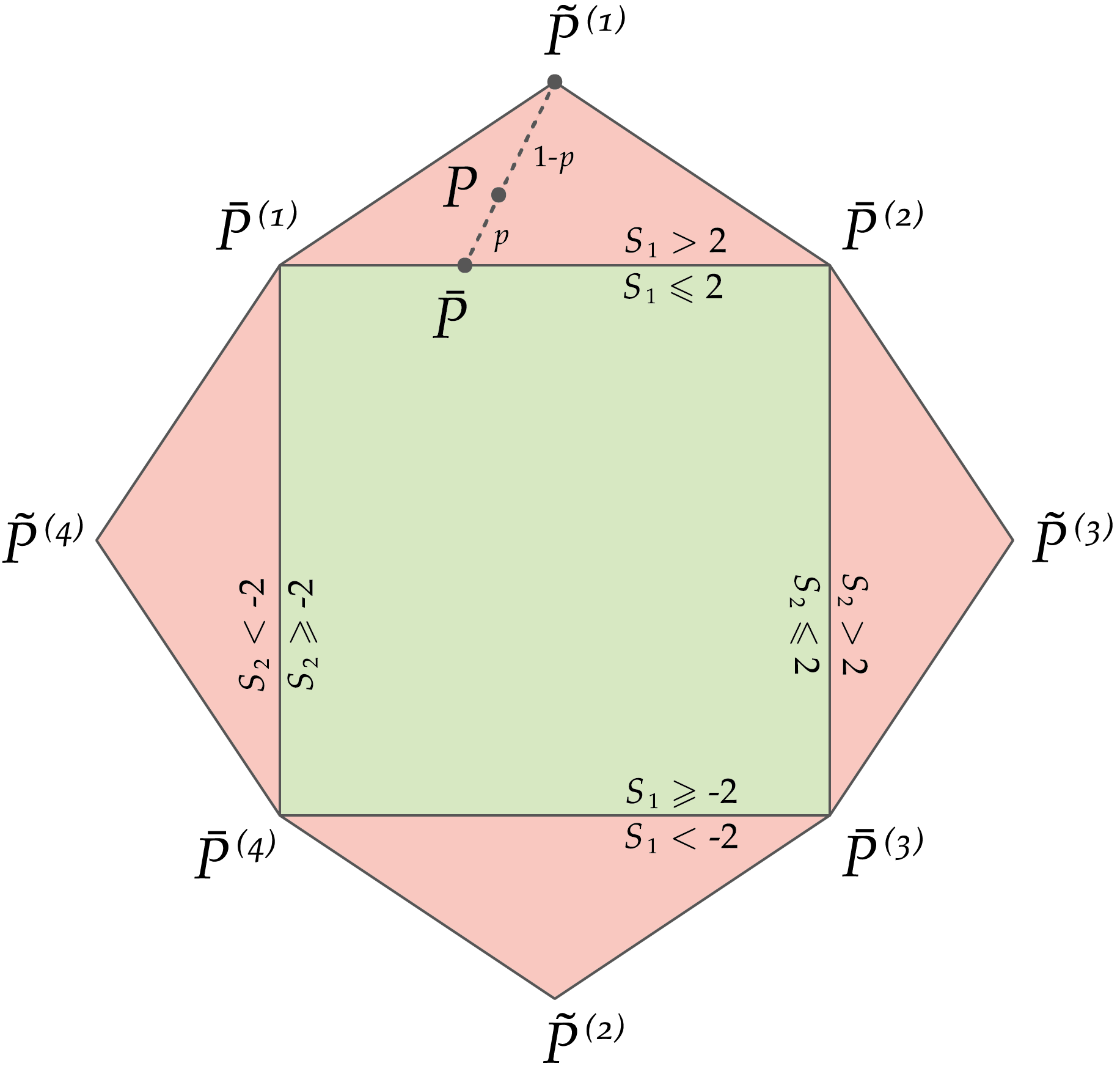}
\caption
{\label{Fig_Decomposition}{\bf\textsf{Decomposition in {\textbf Lemma~\ref{Lemma-decomposition}}.}} A two-dimensional section of the \textit{non-signalling} polytope for the two-setting and two-outcome Bell scenario. It is spanned by sets of \textit{local} deterministic behaviours $\bar{P}^{\scriptscriptstyle\,\textit{(j)}}$ and \textit{non-local} PR-boxes $\tilde{P}^{\scriptscriptstyle\,\textit{(k)}}$. The subset of \textit{local behaviours} (depicted as the inner green rectangle) is fully characterised by the CHSH inequalities Eq.~(\ref{Bell-CHSH-inequalities}). For each non-signalling behaviour $P$ which breaks the local bound, one finds a convex decomposition $P=p\cdot\bar{P}+(1-p)\cdot\tilde{P}^{\scriptscriptstyle\,\textit{(k)}}$ with $p=\tfrac{1}{2}(4-S_{\scriptscriptstyle max})$ for some local behaviour $\bar{P}$ and a PR-box $\tilde{P}^{\scriptscriptstyle\,\textit{(k)}}$. On the picture, we have $S_{\scriptscriptstyle max}=S_{\scriptscriptstyle 1}>2$ and $k=1$.}
\end{figure}

Let us start by recalling the concept of a PR-box (due to Popescu-Rohrlich~[\citenum{PoRo94}]) which is a non-signalling behaviour $\{\tilde{P}_{\scriptscriptstyle ab|xy}\}_{\scriptscriptstyle xy}$ saturating the CHSH expressions. To be more precise, only one of the expressions Eqs.~(\ref{S1})-(\ref{S4}) can reach its maximal/minimal value $\tilde{S}_{\scriptscriptstyle i}=\pm\,4$ and then the rest must be equal to zero (cf. \textbf{Remark~\ref{Lemma-inequality}}). There are eight possible PR-boxes out of which we select just two cases for illustration
\begin{eqnarray}\label{PR-box-1}
\tilde{P}^{\scriptscriptstyle\,\textit{(1)}}_{\scriptscriptstyle ab|xy}\ =\ \left\{\begin{array}{ll}\nicefrac{1}{2}&\text{\quad for\ $a\oplus b=x\cdot y$}\,,\\0&\quad\text{otherwise}\,,\end{array}\right.
\end{eqnarray}
and
\begin{eqnarray}\label{PR-box-2}
\tilde{P}^{\scriptscriptstyle\,\textit{(2)}}_{\scriptscriptstyle ab|xy}\ =\ \left\{\begin{array}{ll}0&\text{\quad for\ $a\oplus b=x\cdot y$}\,,\\\nicefrac{1}{2}&\quad\text{otherwise}\,.\end{array}\right.
\end{eqnarray}
All the remaining boxes $\{\tilde{P}^{\scriptscriptstyle\,\textit{(k)}}_{\scriptscriptstyle ab|xy}\}_{\scriptscriptstyle xy}$ for $k=3,...\,,8$ obtain by relabelling the indices $x,y$.
These two behaviours saturate the first CHSH expression Eq.~(\ref{S1}), i.e.
\begin{eqnarray}\label{PR-S}
\begin{array}{lll}
\tilde{S}^{\scriptscriptstyle\,\textit{(1)}}_{\scriptscriptstyle 1}\ =\ +\,4\,, &\quad\tilde{S}^{\scriptscriptstyle\,\textit{(1)}}_{\scriptscriptstyle i}=0&\quad\text{for $i=2,3,4$}\,,\vspace{0.1cm}\\
\tilde{S}^{\scriptscriptstyle\,\textit{(2)}}_{\scriptscriptstyle 1}\ =\ -\,4\,, &\quad\tilde{S}^{\scriptscriptstyle\,\textit{(2)}}_{\scriptscriptstyle i}=0&\quad\text{for $i=2,3,4$}\,,
\end{array}
\end{eqnarray}
and similarly the remaining six boxes saturate the other three CHSH expressions Eqs.~(\ref{S2})-(\ref{S4}) (in particular we have $\tilde{S}^{\scriptscriptstyle\,\textit{(k)}}_{\scriptscriptstyle 1}=0$ for $k=3,...\,,8$). In what follows, PR-boxes will be used because of their role in the description of the non-signalling polytope in a two-setting and two-outcome Bell scenario~[\citenum{Sc19}].

For reference let us quote \textbf{Lemma~\ref{Lemma-decomposition}} from the main manuscript that we will shortly prove; see~[\citenum{Pi03}] for a related result. See Fig.~\ref{Fig_Decomposition} for illustration.
\setcounter{lemma}{2}
\begin{lemma}\label{Lemma-decomposition-Appendix}
Each non-signalling behaviour $\{P_{\scriptscriptstyle ab|xy}\}_{\scriptscriptstyle xy}$ with binary settings $x,y\in\mathfrak{M}=\{0,\,1\}$ can be decomposed as a convex mixture of a local behaviour $\{\bar{P}_{\scriptscriptstyle ab|xy}\}_{\scriptscriptstyle xy}$ and a PR-box $\{\tilde{P}_{\scriptscriptstyle ab|xy}\}_{\scriptscriptstyle xy}$ in the form
\begin{eqnarray}\label{decomposition}
P_{\scriptscriptstyle ab|xy}\ =\ p\cdot \bar{P}_{\scriptscriptstyle ab|xy}+(1-p)\cdot \tilde{P}_{\scriptscriptstyle ab|xy}\,,
\end{eqnarray}
with $p=\tfrac{1}{2}(4-S_{\scriptscriptstyle max})$ for all $x,y\in\{0,1\}$.
\newline
[In the case $S_{\scriptscriptstyle 1}>2$, the PR-box is that of Eq.~(\ref{PR-box-1})].
\end{lemma}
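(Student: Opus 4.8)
The plan is to solve for $\bar P$ directly from the ansatz~(\ref{decomposition}) and then certify that the resulting object is a genuine local behaviour. Under the assumption~(\ref{assumption}) that $S_{\scriptscriptstyle max}=S_{\scriptscriptstyle 1}>2$, I put $p=\tfrac12(4-S_{\scriptscriptstyle 1})$, so that $0\leqslant p<1$, and take the PR-box to be $\tilde P^{\scriptscriptstyle\,\textit{(1)}}$ of~(\ref{PR-box-1}). If $p=0$ then $S_{\scriptscriptstyle 1}=4$ pins $P$ to the PR-box and the statement is immediate; hence I may assume $0<p<1$ and define
\[
\bar P_{\scriptscriptstyle ab|xy}\ :=\ \tfrac1p\big(P_{\scriptscriptstyle ab|xy}-(1-p)\,\tilde P^{\scriptscriptstyle\,\textit{(1)}}_{\scriptscriptstyle ab|xy}\big)\,.
\]
By construction~(\ref{decomposition}) then holds identically, so the whole task reduces to verifying that $\bar P$ is a valid local behaviour: normalised, non-negative, non-signalling, and obeying the CHSH bounds.

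Three of these four checks are short. Normalisation is $\sum_{ab}\bar P_{\scriptscriptstyle ab|xy}=\tfrac1p(1-(1-p))=1$. The non-signalling property is inherited because $\bar P=\tfrac1p\,P-\tfrac{1-p}{p}\,\tilde P^{\scriptscriptstyle\,\textit{(1)}}$ is an affine combination of the non-signalling behaviours $P$ and $\tilde P^{\scriptscriptstyle\,\textit{(1)}}$ with coefficients summing to $1$, and the non-signalling constraints are linear. The CHSH bounds are read off from the linearity of the functionals $S_{\scriptscriptstyle i}$ together with~(\ref{PR-S}): one finds $\bar S_{\scriptscriptstyle 1}=\tfrac1p\big(S_{\scriptscriptstyle 1}-4(1-p)\big)=2$ precisely by the choice of $p$, while $\bar S_{\scriptscriptstyle i}=S_{\scriptscriptstyle i}/p$ for $i=2,3,4$, so that $|\bar S_{\scriptscriptstyle i}|\leqslant2$ is equivalent to $|S_{\scriptscriptstyle i}|\leqslant 2p=4-S_{\scriptscriptstyle 1}$ --- which is exactly \textbf{Remark~\ref{Lemma-inequality}} applied to the pair $\{1,i\}$. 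Since in the two-setting two-outcome scenario the local region is cut out of the non-signalling polytope precisely by the CHSH inequalities~(\ref{Bell-CHSH-inequalities}) (cf.\ Fig.~\ref{Fig_Decomposition} and Ref.~[\citenum{Sc19}]), this will establish locality of $\bar P$.

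The genuinely non-trivial step --- and the one I expect to be the main obstacle --- is non-negativity of $\bar P$. It is automatic wherever $\tilde P^{\scriptscriptstyle\,\textit{(1)}}_{\scriptscriptstyle ab|xy}=0$, so it remains to control the eight entries lying on the support of the PR-box, where the requirement sharpens to $P_{\scriptscriptstyle ab|xy}\geqslant\tfrac{1-p}{2}=\tfrac14(S_{\scriptscriptstyle 1}-2)$. I would prove each of these using the standard parametrisation $4P_{\scriptscriptstyle ab|xy}=1+a\langle A_{\scriptscriptstyle x}\rangle+b\langle B_{\scriptscriptstyle y}\rangle+ab\,E_{\scriptscriptstyle xy}$ (outcomes $a,b=\pm1$, correlators $E_{\scriptscriptstyle xy}$, with $S_{\scriptscriptstyle 1}=E_{\scriptscriptstyle 00}+E_{\scriptscriptstyle 01}+E_{\scriptscriptstyle 10}-E_{\scriptscriptstyle 11}$), under which each inequality rearranges into a manifest sum of three probabilities of $P$; for the representative entry $(x,y)=(0,0)$ with $a=b=+1$ one gets
\[
P_{\scriptscriptstyle ++|00}-\tfrac{1-p}{2}\ =\ P_{\scriptscriptstyle +-|01}+P_{\scriptscriptstyle -+|10}+P_{\scriptscriptstyle ++|11}\ \geqslant\ 0\,.
\]
Only the representatives $(0,0)$, $(0,1)$ and $(1,1)$ need to be treated explicitly, the remaining five entries being related to these by the party-swap $x\leftrightarrow y$ and the simultaneous flip of both outcomes, both of which preserve $S_{\scriptscriptstyle 1}$ and hence the value of $p$. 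Collecting non-negativity, normalisation, non-signalling and the CHSH bounds then yields the claimed decomposition~(\ref{decomposition}).
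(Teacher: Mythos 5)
Your proof is correct, and its skeleton matches the paper's: you use the same ansatz $\bar P_{ab|xy}=\tfrac1p\big(P_{ab|xy}-(1-p)\,\tilde P^{(1)}_{ab|xy}\big)$, the same trivial normalisation check, the same inheritance of non-signalling by affinity, and the same locality argument (linearity of the $S_i$, the choice $p=\tfrac12(4-S_1)$ forcing $\bar S_1=2$, \textbf{Remark~\ref{Lemma-inequality}} giving $|\bar S_i|\leqslant 2$ for $i\neq 1$, and then the CHSH characterisation of the local set --- the paper phrases this via Fine's theorem, you via the polytope description of Ref.~[\citenum{Sc19}], which is the same fact). Where you genuinely depart from the paper is the crux, non-negativity of $\bar P$. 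The paper proves it geometrically: it expands $P$ over the $24$ extremal points of the non-signalling polytope ($16$ deterministic boxes and $8$ PR-boxes), computes $S_1$ by linearity to conclude that the weight of $\tilde P^{(1)}$ satisfies $q_1\geqslant \tfrac12 S_1-1=1-p$, and hence that $P-(1-p)\tilde P^{(1)}$ is a non-negative combination of extremal points. You instead prove the eight inequalities $P_{ab|xy}\geqslant\tfrac{1-p}{2}$ by exact algebraic identities whose right-hand sides are manifestly non-negative; your representative identity $P_{++|00}-\tfrac{1-p}{2}=P_{+-|01}+P_{-+|10}+P_{++|11}$ is correct (it follows from the non-signalling correlator parametrisation), the analogous identities for your other two representatives do hold, namely $P_{++|01}-\tfrac{1-p}{2}=P_{+-|00}+P_{-+|10}+P_{++|11}$ and $P_{+-|11}-\tfrac{1-p}{2}=P_{-+|00}+P_{+-|01}+P_{+-|10}$, and your symmetry reduction (party swap and global outcome flip, both preserving $S_1$) correctly covers the remaining five entries. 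The trade-off: your route is more elementary and self-contained at this step --- it needs only non-signalling and never the vertex classification of the non-signalling polytope, which in the paper is an external input exactly here --- and it handles the degenerate case $p=0$ (i.e.\ $S_1=4$) explicitly, which the paper passes over; the paper's route is shorter once the polytope structure is granted and directly matches the geometric picture of Fig.~\ref{Fig_Decomposition}. Both arguments still rely on the CHSH description of the local polytope for the locality part, so neither is fully self-contained there.
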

\noindent Here, local behaviour means existence of an LHV+FHV model of $\{\bar{P}_{\scriptscriptstyle ab|xy}\}_{\scriptscriptstyle xy}$ and $S_{\scriptscriptstyle max}=\max\,\{|S_{\scriptscriptstyle i}|:i=1,...\,,4\}$.

\begin{proof}
Let us consider a given non-signalling behaviour $\{P_{\scriptscriptstyle ab|xy}\}_{\scriptscriptstyle xy}$ such that $S_{\scriptscriptstyle 1}>2$, and define the following four distributions
\begin{eqnarray}\label{decomposition-ansatz}
\bar{P}_{\scriptscriptstyle ab|xy}\ :=\ \tfrac{1}{p}\ P_{\scriptscriptstyle ab|xy}\ -\ \tfrac{1-p}{p}\ \tilde{P}^{\scriptscriptstyle\,\textit{(1)}}_{\scriptscriptstyle ab|xy}\,,
\end{eqnarray}
for $x,y\in\{0,1\}$, where $\{\bar{P}^{\scriptscriptstyle\,\textit{(1)}}_{\scriptscriptstyle ab|xy}\}_{\scriptscriptstyle xy}$ is the PR-box defined in Eq.~(\ref{PR-box-1}). In order to prove \textbf{Lemma~\ref{Lemma-decomposition-Appendix}} it is enough to show that
\begin{itemize}
\item[\textit{(a)}]{$\bar{P}_{\scriptscriptstyle ab|xy}$ are well-defined probability distributions\\for each $x,y\in\{0,1\}$, and }
\item[\textit{(b)}]{$\{\bar{P}_{\scriptscriptstyle ab|xy}\}_{\scriptscriptstyle xy}$ defines a local behaviour,}
\end{itemize}
for the choice $p=\tfrac{1}{2}(4-S_{\scriptscriptstyle 1})$.

\vspace{0.2cm}

\underline{\textit{Ad.~(a).}} The normalisation condition $\sum_{\scriptscriptstyle ab}\bar{P}_{\scriptscriptstyle ab|xy}=1$ is trivially satisfied, since $P_{\scriptscriptstyle ab|xy}$ and $\tilde{P}^{\scriptscriptstyle\,\textit{(1)}}_{\scriptscriptstyle ab|xy}$ are normalised. It only remains to check that $\bar{P}_{\scriptscriptstyle ab|xy}\geqslant0$.

It is known that the non-signalling polytope for the two-setting and two-outcome scenario is characterised by 24\,=\,16\,+\,8 extremal points~[\citenum{Sc19}] (cf. Fig.~\ref{Fig_Decomposition}):
\begin{eqnarray}\nonumber
\begin{array}{lcl}\{\bar{P}^{\scriptscriptstyle\,\textit{(j)}}_{\scriptscriptstyle ab|xy}\}_{\scriptscriptstyle xy}&\text{--}&\text{sixteen local deterministic behaviours}\\&&\text{for $j=1,...\,,16$\,,}\vspace{0.2cm}\\
\{\tilde{P}^{\scriptscriptstyle\,\textit{(k)}}_{\scriptscriptstyle ab|xy}\}_{\scriptscriptstyle xy}&\text{--}&\text{eight PR-boxes for $k=1,...\,,8$\,.}
\end{array}
\end{eqnarray}
This means that we can decompose $\{P_{\scriptscriptstyle ab|xy}\}_{\scriptscriptstyle xy}$ as a convex combination
\begin{eqnarray}\label{convex-decomposition}
P_{\scriptscriptstyle ab|xy}\ =\ \sum_{\scriptscriptstyle j=1}^{\scriptscriptstyle 16}\ p_{\scriptscriptstyle j}\cdot\bar{P}^{\scriptscriptstyle\,\textit{(j)}}_{\scriptscriptstyle ab|xy}\ +\ \sum_{\scriptscriptstyle k=1}^{\scriptscriptstyle 8}\ q_{\scriptscriptstyle k}\cdot\tilde{P}^{\scriptscriptstyle\,\textit{(k)}}_{\scriptscriptstyle ab|xy}\,
\end{eqnarray}
with proper normalisation $\sum_{\scriptscriptstyle j=1}^{\scriptscriptstyle 16}\ p_{\scriptscriptstyle j} + \sum_{\scriptscriptstyle k=1}^{\scriptscriptstyle 8}\ q_{\scriptscriptstyle k}=1$ and $p_{\scriptscriptstyle j},q_{\scriptscriptstyle k}\geqslant0$ for all $j,k$. We can use Eq.~(\ref{convex-decomposition}) to calculate the first CHSH expression Eq.~(\ref{S1}), which by linearity gives
\begin{eqnarray}
S_{\scriptscriptstyle1}\ =\ \sum_{\scriptscriptstyle j=1}^{\scriptscriptstyle 16}\ p_{\scriptscriptstyle j}\cdot \bar{S}^{\scriptscriptstyle\,\textit{(j)}}_{\scriptscriptstyle 1}\ +\ \sum_{\scriptscriptstyle k=1}^{\scriptscriptstyle 8}\ q_{\scriptscriptstyle k}\cdot \tilde{S}^{\scriptscriptstyle\,\textit{(k)}}_{\scriptscriptstyle1}\,,
\end{eqnarray}
where $\bar{S}^{\scriptscriptstyle\,\textit{(j)}}_{\scriptscriptstyle 1}$ and $\tilde{S}^{\scriptscriptstyle\,\textit{(k)}}_{\scriptscriptstyle 1}$ correspond to $\{\bar{P}^{\scriptscriptstyle\,\textit{(j)}}_{\scriptscriptstyle ab|xy}\}_{\scriptscriptstyle xy}$ and $\{\tilde{P}^{\scriptscriptstyle\,\textit{(k)}}_{\scriptscriptstyle ab|xy}\}_{\scriptscriptstyle xy}$ respectively. For this particular choice of distributions, we have $|\bar{S}^{\scriptscriptstyle\,\textit{(j)}}_{\scriptscriptstyle 1}|\leqslant2$ (for local behaviours) and Eq.~(\ref{PR-S}) (for PR-boxes), from which the following bound is obtained
\begin{eqnarray}\nonumber
S_{\scriptscriptstyle1}&\!\!=\!\!&\sum_{\scriptscriptstyle j=1}^{\scriptscriptstyle 16}\ p_{\scriptscriptstyle j}\cdot \bar{S}^{\scriptscriptstyle\,\textit{(j)}}_{\scriptscriptstyle 1}\ +\ 4\cdot q_{\scriptscriptstyle 1}\ -\ 4\cdot q_{\scriptscriptstyle 2}\\
&\!\!\leqslant\!\!&2\cdot\sum_{\scriptscriptstyle j=1}^{\scriptscriptstyle 16}\ p_{\scriptscriptstyle j}\ +\ 4\cdot q_{\scriptscriptstyle 1}\ \leqslant\ 2\,+\,2\cdot q_{\scriptscriptstyle 1}\,.
\end{eqnarray}
Note that in the last inequality we used the normalisation $\sum_{\scriptscriptstyle j=1}^{\scriptscriptstyle 16}\ p_{\scriptscriptstyle j} + \sum_{\scriptscriptstyle k=1}^{\scriptscriptstyle 8}\ q_{\scriptscriptstyle k}=1$. Thus, in the decomposition of Eq.~(\ref{convex-decomposition}) the coefficient $q_{\scriptscriptstyle 1}\geqslant \tfrac{1}{2}S_{\scriptscriptstyle1}-1$. A quick comparison of Eqs.~(\ref{decomposition-ansatz}) and (\ref{convex-decomposition}) reveals that such a defined distribution is well-defined, i.e.,  $\bar{P}_{\scriptscriptstyle ab|xy}\geqslant0$, as long as $q_1\geqslant1-p$. This is true for $p\geqslant \tfrac{1}{2}(4-S_{\scriptscriptstyle1})$.

\vspace{0.2cm}

\underline{\textit{Ad.~(b).}} Since both sets of behaviours $\{P_{\scriptscriptstyle ab|xy}\}_{\scriptscriptstyle xy}$ and the PR-box $\{\tilde{P}^{\scriptscriptstyle\,\textit{(1)}}_{\scriptscriptstyle ab|xy}\}_{\scriptscriptstyle xy}$ are non-signalling, then $\{\bar{P}_{\scriptscriptstyle ab|xy}\}_{\scriptscriptstyle xy}$ defined in Eq.~(\ref{decomposition-ansatz}) is non-signalling too. Therefore, by virtue of Fine's theorem~[\citenum{Fi82a,Ha14b}], it suffices to check that all CHSH inequalities are satisfied $|\bar{S}_{\scriptscriptstyle i}|\leqslant2$.

By linearity of the CHSH expresions, from Eq.~(\ref{decomposition-ansatz}) we get $\bar{S}_{\scriptscriptstyle i}=\tfrac{1}{p}\,S_{\scriptscriptstyle i}-\tfrac{1-p}{p}\,\tilde{S}^{\scriptscriptstyle\,\textit{(1)}}_{\scriptscriptstyle i}
$. It follows that
\begin{eqnarray}\nonumber
|\bar{S}_{\scriptscriptstyle i}|&\!\!=\!\!&\tfrac{|S_{\scriptscriptstyle i}\,-\,(1-p)\,\tilde{S}^{\scriptscriptstyle\,\textit{(1)}}_{\scriptscriptstyle i}|}{p}\\
&\!\!\stackrel{\scriptscriptstyle{(\ref{PR-S})}}{=}\!\!&\left\{\begin{array}{lll}
\tfrac{|S_{\scriptscriptstyle 1}-\,4\,(1-p)|}{p}&&\text{for\ $i=1$\,,}\\
\tfrac{|S_{\scriptscriptstyle i}|}{p}\ \ \leqslant\ \ \tfrac{4-|S_{\scriptscriptstyle 1}|}{p}&&\text{for\ $i\neq1$}\,,\end{array}
\right.\ 
\end{eqnarray}
where the last inequality for $i\neq1$ is due to \textbf{Remark~\ref{Lemma-inequality}}.
Now, it is straightforward to check that for $p=\tfrac{1}{2}(4-S_{\scriptscriptstyle 1})$ all four CHSH expressions satisfy the local bound, i.e. we have $|\bar{S}_{\scriptscriptstyle 1}|=2$ and $|\bar{S}_{\scriptscriptstyle i}|\leqslant2$ for $i\neq1$.

\end{proof}

\section*{DIRECT PROOF OF THEOREM~\ref{Theorem-ChainedBell}}
We start by considering a wide class of upper bounds on the measure free choice $\mu_{\scriptscriptstyle F}$. This result will be used to prove \textbf{Theorem~{\ref{Theorem-ChainedBell}}} with the help of correlations saturating the so called chained Bell inequalities (for a special choice of measurement settings on a Bell state).

\subsection*{Upper bounds on measure of free choice {\boldmath$\mu_{\scriptscriptstyle F}$}}

Let $x,y\in\mathfrak{M}$ with the choice of settings $\mathfrak{M}$ unspecified for the time being. Our analysis will proceed with algebraic expressions which take the following form
\begin{eqnarray}\label{S}
S\ =\ \sum_{\scriptscriptstyle x,y}\,\alpha_{\scriptscriptstyle xy}\cdot\langle ab\rangle_{\scriptscriptstyle xy}\,,
\end{eqnarray}
where $\alpha_{\scriptscriptstyle xy}\in\{0,\pm\,1\}$ are some arbitrary coefficients and $\langle ab\rangle_{\scriptscriptstyle xy}\equiv\sum_{\scriptscriptstyle a,b}\,ab\,P_{\scriptscriptstyle  ab|xy}$ is the usual correlation function. For each such expression $S$ we will look at two auxiliary quantities. The first one counts the number of non-zero terms in Eq.~(\ref{S}), i.e.,
\begin{eqnarray}\label{S-max}
S^{\scriptscriptstyle \#}\ =\ \sum_{\scriptscriptstyle x,y}\,|\alpha_{\scriptscriptstyle xy}|\,.
\end{eqnarray}
The second one is the upper bound of Eq.~(\ref{S}) with the correlation coefficients replaced by a product, i.e.,\footnote{A closer inspection of Eq.~(\ref{S-loc}) reveals the maximum is attained at one of the extremal points $a_{\scriptscriptstyle x},b_{\scriptscriptstyle y}=\pm\,1$ (since the expression $\sum_{\scriptscriptstyle x,y}\,\alpha_{\scriptscriptstyle xy}\cdot a_{\scriptscriptstyle x}b_{\scriptscriptstyle y}$ is a linear function for each variable $a_{\scriptscriptstyle x},b_{\scriptscriptstyle y}$).}
\begin{eqnarray}\label{S-loc}
S^{\scriptscriptstyle loc}\ =\ \max_{\scriptscriptstyle -1\leqslant\,a_{x}\,\leqslant1\atop -1\leqslant\,b_{y}\,\leqslant1}\ \sum_{\scriptscriptstyle x,y}\,\alpha_{\scriptscriptstyle xy}\cdot a_{\scriptscriptstyle x}b_{\scriptscriptstyle y}\,.
\end{eqnarray}
Note that both $S^{\scriptscriptstyle \#}$ and $S^{\scriptscriptstyle loc}$ depend only on the coefficients $\alpha_{\scriptscriptstyle xy}$ (and not on the correlation functions $\langle ab\rangle_{\scriptscriptstyle xy}$).

In the following, we consider an arbitrary HV model of a given set behaviour $\{P_{\scriptscriptstyle ab|xy}\}_{\scriptscriptstyle xy}$ obtained in a Bell experiment (with $x,y\in\mathfrak{M}$). The decomposition Eq.~(\ref{freedom-splitting}) entails splitting of  Eq.~(\ref{Pab|xy}) into two components
\begin{eqnarray}\label{P=Non-Free}
P_{\scriptscriptstyle ab|xy}\ =\ \sum_{\scriptscriptstyle \lambda\in\Lambda_{\scriptscriptstyle F}}P_{\scriptscriptstyle ab|xy\lambda}\cdot P_{\scriptscriptstyle \lambda}\ +\sum_{\scriptscriptstyle \lambda\in\Lambda_{\scriptscriptstyle NF}}P_{\scriptscriptstyle ab|xy\lambda}\cdot P_{\scriptscriptstyle \lambda|xy}\,,
\end{eqnarray}
which describe fundamentally different situations \textit{with} and \textit{without} freedom of choice, i.e. corresponding to the hidden variable $\lambda$ being respectively in the \textit{free domain} ($\Lambda_{\scriptscriptstyle F}$) or \textit{non-free domain} ($\Lambda_{\scriptscriptstyle NF}$). This can be used to write the correlation coefficients in the form
\begin{eqnarray}
\langle ab\rangle_{\scriptscriptstyle xy}
\ =\ \sum_{\scriptscriptstyle\lambda\in\Lambda_{\scriptscriptstyle F}}\langle ab\rangle_{\scriptscriptstyle xy\lambda}\cdot P_{\scriptscriptstyle\lambda}\ +\!\sum_{\scriptscriptstyle\lambda\in\Lambda_{\scriptscriptstyle{NF}}}\langle ab\rangle_{\scriptscriptstyle xy\lambda}\cdot P_{\scriptscriptstyle\lambda|xy}\,,
\end{eqnarray}
where $\langle ab\rangle_{\scriptscriptstyle xy\lambda}\equiv\sum_{\scriptscriptstyle a,b}\,ab\,P_{\scriptscriptstyle  ab|xy\lambda}$ is the correlation coefficient for fixed $\lambda$.
Hence, the expression $S$ defined in Eq.~(\ref{S}) splits into two parts
\begin{eqnarray}\label{S=Free}
S&\!\!=\!\!&\sum_{\scriptscriptstyle\lambda\in\Lambda_{\scriptscriptstyle F}}\Big(\,\sum_{\scriptscriptstyle x,y}\,\alpha_{\scriptscriptstyle xy}\cdot\langle ab\rangle_{\scriptscriptstyle xy\lambda}\,\Big)\cdot\,P_{\scriptscriptstyle\lambda}\ +
\\\label{S=Non-Free}
&&\sum_{\scriptscriptstyle\lambda\in\Lambda_{\scriptscriptstyle NF}}\Big(\,\sum_{\scriptscriptstyle x,y}\,\alpha_{\scriptscriptstyle xy}\cdot\langle ab\rangle_{\scriptscriptstyle xy\lambda}\cdot P_{\scriptscriptstyle\lambda|xy}\,\Big)\,.
\end{eqnarray}
Note that in the first term probability $P_{\scriptscriptstyle\lambda}$ factors out due to the condition in Eq.~(\ref{freedom-splitting}) defining the \textit{free domain} $\Lambda_{\scriptscriptstyle F}$. So far, the causal structure of the HV model has not been constrained.

Now, we make the \textit{locality assumption} Eq.~(\ref{factorisation}) which entails the following factorisation $\langle ab\rangle_{\scriptscriptstyle xy\lambda}=\langle a\rangle_{\scriptscriptstyle x\lambda}\,\langle b\rangle_{\scriptscriptstyle y\lambda}$\,, 
with $\langle a\rangle_{\scriptscriptstyle x\lambda}=\sum_{\scriptscriptstyle a}\,a\,P_{\scriptscriptstyle a|x\lambda}$ and $\langle b\rangle_{\scriptscriptstyle y\lambda}=\sum_{\scriptscriptstyle b}\,b\,P_{\scriptscriptstyle b|y\lambda}$. Then the expression in Eq.~(\ref{S=Free}) is bounded by $S^{\scriptscriptstyle loc}$ of Eq.~(\ref{S-loc}), i.e.,
\begin{eqnarray}\label{estimate-1}
\text{Eq.~(\ref{S=Free})}\ \,\leqslant\ \,\sum_{\scriptscriptstyle\lambda\in\Lambda_{\scriptscriptstyle F}}S^{\scriptscriptstyle loc}\cdot\,P_{\scriptscriptstyle\lambda}\ =\ S^{\scriptscriptstyle loc}\cdot\sum_{\scriptscriptstyle\lambda\in\Lambda_{\scriptscriptstyle F}} P_{\scriptscriptstyle\lambda}\,.
\end{eqnarray}
For the expression in  Eq.~(\ref{S=Non-Free}) we use $S^{\scriptscriptstyle \#}$ of Eq.~(\ref{S-max}) to make the following estimate
\begin{eqnarray}\nonumber
\text{Eq.~(\ref{S=Non-Free})}&\!\!\leqslant\!\!&\sum_{\scriptscriptstyle x,y}\,|\alpha_{\scriptscriptstyle xy}|\!\sum_{\scriptscriptstyle \lambda\in\Lambda_{\scriptscriptstyle NF}}\!P_{\scriptscriptstyle\lambda|xy}\,\stackrel{\scriptscriptstyle(\ref{star})}{=}\,\sum_{\scriptscriptstyle x,y}\,|\alpha_{\scriptscriptstyle xy}|\cdot\Big(\,1-\sum_{\scriptscriptstyle \lambda\in\Lambda_{\scriptscriptstyle F}}P_{\scriptscriptstyle\lambda}\,\Big)
\\\label{estimate-2}
&\!\!\leqslant\!\!&S^{\scriptscriptstyle \#}\cdot\Big(\,1-\sum_{\scriptscriptstyle \lambda\in\Lambda_{\scriptscriptstyle F}}P_{\scriptscriptstyle\lambda}\,\Big)\,,
\end{eqnarray}
where we use the fact that conditional probabilities $P_{\scriptscriptstyle\lambda|xy}$ are normalised and $\Lambda=\Lambda_{\scriptscriptstyle F}\cup\Lambda_{\scriptscriptstyle NF}$ (cf. Eq.~(\ref{freedom-splitting}))\,, which entails
\begin{eqnarray}\label{star}
1\ =\ \sum_{\scriptscriptstyle\lambda\in\Lambda}P_{\scriptscriptstyle\lambda|xy}\ =\ \sum_{\scriptscriptstyle\lambda\in\Lambda_{\scriptscriptstyle F}}P_{\scriptscriptstyle\lambda}\ +\sum_{\scriptscriptstyle\lambda\in\Lambda_{\scriptscriptstyle{NF}}}P_{\scriptscriptstyle\lambda|xy}
\end{eqnarray}
for each $x,y\in\mathfrak{M}$. Putting everything together, from Eqs.~(\ref{S=Free})/(\ref{S=Non-Free}) and~(\ref{estimate-1})/(\ref{estimate-2}) we get the inequality
\begin{eqnarray}\nonumber
S\ \leqslant\ S^{\scriptscriptstyle loc}\cdot \sum_{\scriptscriptstyle \lambda\in\Lambda_{\scriptscriptstyle F}}P_{\scriptscriptstyle\lambda}\,+\,S^{\scriptscriptstyle \#}\cdot(1-\sum_{\scriptscriptstyle \lambda\in\Lambda_{\scriptscriptstyle F}}P_{\scriptscriptstyle\lambda})\,,
\end{eqnarray}
which provides the following bound on the amount of free choice in a given LHV model (see \textbf{Remark~\ref{remark-definition}})
\begin{eqnarray}
\sum_{\scriptscriptstyle \lambda\in\Lambda_{\scriptscriptstyle F}}P_{\scriptscriptstyle\lambda}\ \leqslant\ \frac{S^{\scriptscriptstyle \#}-S}{S^{\scriptscriptstyle \#}-S^{\scriptscriptstyle loc}}\,.
\end{eqnarray}
Since this inequality should be satisfied by any LHV model reproducing given behaviour $\{P_{\scriptscriptstyle ab|xy}\}_{\scriptscriptstyle xy}$ in a Bell experiment (irrespective of the distribution of settings $P_{\scriptscriptstyle xy}$), then we get upper bound on the the measure of free choice defined in Eq.~(\ref{freedom-measure})
\begin{eqnarray}\label{free-choice-upper}
\mu_{\scriptscriptstyle F}\ \leqslant\ \frac{S^{\scriptscriptstyle \#}-S}{S^{\scriptscriptstyle \#}-S^{\scriptscriptstyle loc}}\,.
\end{eqnarray}
Crucially, the bound in Eq.~(\ref{free-choice-upper}) holds for any expression $S$ as defined in Eq.~(\ref{S}). Let us emphasise that the numbers $S^{\scriptscriptstyle \#}$ and $S^{\scriptscriptstyle loc}$ are determined solely by the choice of coefficients $\alpha_{\scriptscriptstyle xy}\,$, while $S$ itself depends on the observed experimental statistics $\{P_{\scriptscriptstyle ab|xy}\}_{\scriptscriptstyle xy}$ through the correlation coefficients $\langle ab\rangle_{\scriptscriptstyle xy}$\,, see Eqs.~(\ref{S})\,-\,(\ref{S-loc}).  

\subsection*{Proof of Theorem~{\ref{Theorem-ChainedBell}}}

\begin{figure}[t]
\centering
\includegraphics[width=\columnwidth]{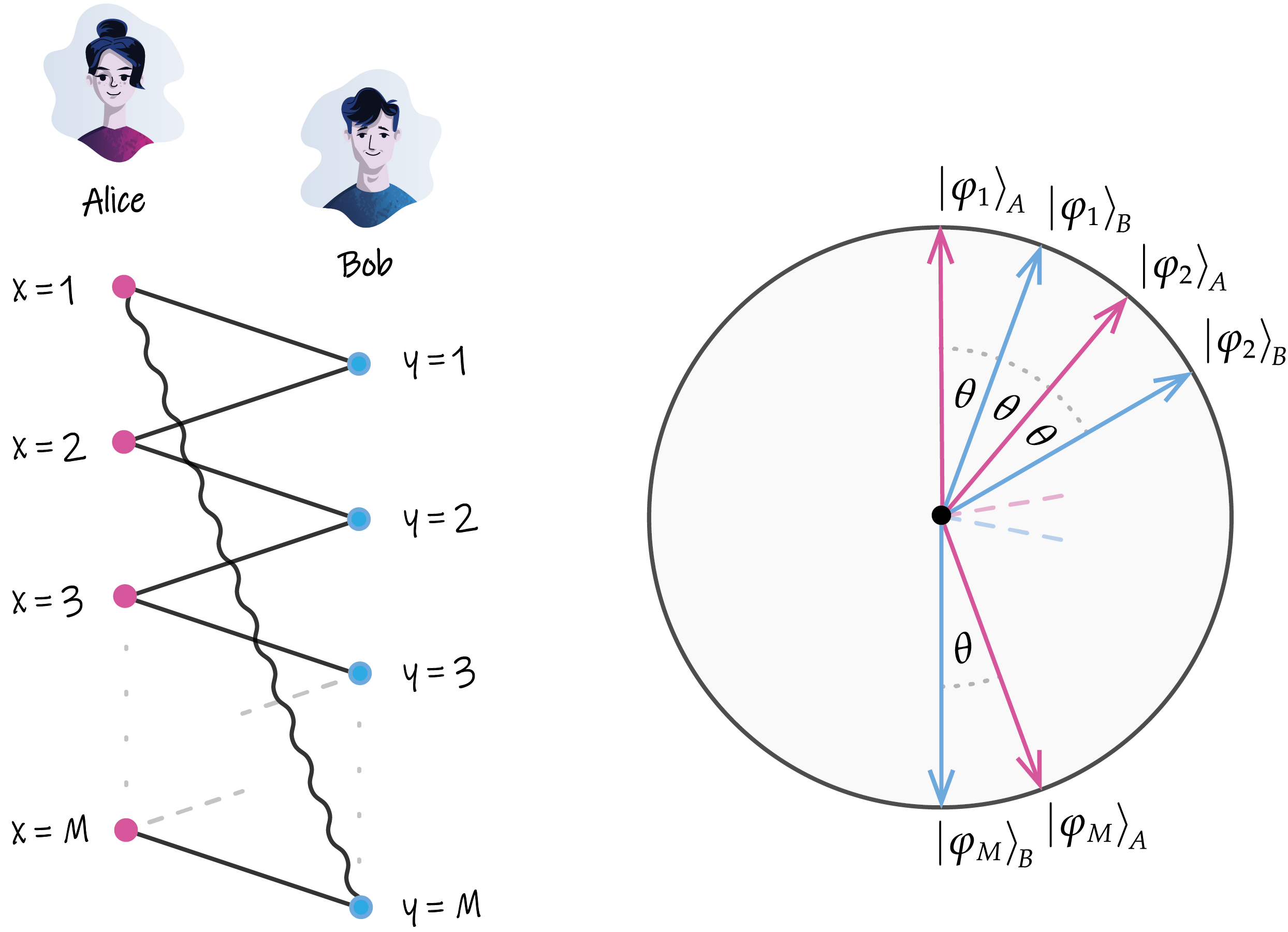}
\caption
{\label{Fig-Chained-Bell-Inequalities}{\bf\textsf{Chained Bell inequalities.}} On the left, there is a schematic illustration of the chained expression in Eq.~(\ref{Chained-Bell}) with each one of $2M$ terms $\langle ab\rangle_{\scriptscriptstyle xy}$ represented by the line $x-y$. Solid lines correspond to the '$+$' signs and the wavy line corresponds to the '$-$' sign. On the right, the $\hat{x}-\hat{z}$ section of the Bloch sphere concerns measurement directions defined in Eqs.~(\ref{chained-measurements-Alice}) and (\ref{chained-measurements-Bob}). Alice measures along the directions $\ket{\varphi_{\scriptscriptstyle x}}_{\!\scriptscriptstyle A}$ (depicted in magenta) and Bob measures along the directions $\ket{\varphi_{\scriptscriptstyle y}}_{\!\scriptscriptstyle B}$ (depicted in blue), with $x,y=1,2,\dots,M$ and $\theta=\tfrac{\pi}{2M-1}$.}
\end{figure}

Quantum correlations are stronger than the classical ones, but much weaker than what one might possibly imagine~[\citenum{PoRo94,Po14}]. For the proof of \textbf{Theorem~{\ref{Theorem-ChainedBell}}} is however enough to exclude all freedom of choice for any local account of a maximally entangled state when the number of settings $x,y\in\mathfrak{M}=\{1,2,...\,,M\}$ goes to infinity $M\rightarrow\infty$.

Let us take expression of Eq.~(\ref{S}) in a special chained form~[\citenum{Pe70,BrCa90}]
\begin{eqnarray}\nonumber
S_{\scriptscriptstyle M}&\!\!=\!\!&\langle ab\rangle_{\scriptscriptstyle 11}+\langle ab\rangle_{\scriptscriptstyle 21}+\langle ab\rangle_{\scriptscriptstyle 22}+\langle ab\rangle_{\scriptscriptstyle 32}+...\\
&&...+\langle ab\rangle_{\scriptscriptstyle MM}-\langle ab\rangle_{\scriptscriptstyle 1M}\,,
\label{Chained-Bell}
\end{eqnarray}
which follows the pattern illustrated in Fig.~\ref{Fig-Chained-Bell-Inequalities} (on the left). From simple term counting we have $S^{\scriptscriptstyle \#}_{\scriptscriptstyle M}=2\,M$. Now, if locality is imposed this leads to the so called \textit{chained Bell inequalities}: $S_{\scriptscriptstyle M}\leqslant2\,(M-1)$, see~[\citenum{Pe70,BrCa90}]. It is straightforward to see that Eq.~(\ref{S-loc}) saturates to $S^{\scriptscriptstyle loc}_{\scriptscriptstyle M}=2\,(M-1)$.\footnote{Locality entails factorisation $\expval{ab}_{\scriptscriptstyle xy}=\expval{a}_{\scriptscriptstyle x}\!\expval{b}_{\scriptscriptstyle y}=:a_{\scriptscriptstyle x}b_{\scriptscriptstyle y}$. Then $S_{\scriptscriptstyle M}=(a_{\scriptscriptstyle 1}+a_{\scriptscriptstyle 2})b_{\scriptscriptstyle 1}+(a_{\scriptscriptstyle 2}+a_{\scriptscriptstyle 3})b_{\scriptscriptstyle 2}+...+(a_{\scriptscriptstyle M}-a_{\scriptscriptstyle 1})b_{\scriptscriptstyle M}$. Suppose $a_{\scriptscriptstyle x},b_{\scriptscriptstyle y}=\pm\,1$, then out of $M$ terms taking values $0,\pm\,2$ at least one has to be equal to zero (if $a_{\scriptscriptstyle M}=a_{\scriptscriptstyle 1}$ it is the last one, otherwise it is one of the remaining ones). In such a case $S^{\scriptscriptstyle}_{\scriptscriptstyle M}\leqslant2\,(M-1)$. This extends to intermediate values $-1\leqslant a_{\scriptscriptstyle x},b_{\scriptscriptstyle y}\leqslant+1$, since $S_{\scriptscriptstyle M}$ is a linear function for each of the variables and hence the global maximum is attained at the extremal points $a_{\scriptscriptstyle x},b_{\scriptscriptstyle y}=\pm\,1$. Clearly, taking $a_{\scriptscriptstyle x}=b_{\scriptscriptstyle y}=1$ we get the maximum, i.e. $S^{\scriptscriptstyle loc}_{\scriptscriptstyle M}=2\,(M-1)$.} This means that for the expression Eq.~(\ref{Chained-Bell}) the upper bound on the amount of free choice in Eq.~(\ref{free-choice-upper}) becomes 
\begin{eqnarray}\label{free-choice-upper-chained}
\mu_{\scriptscriptstyle F}\ \leqslant\ \frac{2\,M-S_{\scriptscriptstyle M}}{2\,M-2\,(M-1)}\ =\ M-\tfrac{1}{2}S_{\scriptscriptstyle M}\,.
\end{eqnarray}

Now, we will assume that the correlations in a Bell experiment are described by the quantum statistics given by the Bell state $\ket{\Phi_{\scriptscriptstyle +}}\equiv\tfrac{1}{\sqrt{2}}(\ket{0}_{\!\scriptscriptstyle A}\ket{0}_{\!\scriptscriptstyle B}+\ket{1}_{\!\scriptscriptstyle A}\ket{1}_{\!\scriptscriptstyle B})$ and the choice of settings $x,y$ implements projective measurements in the respective bases $\big\{\!\ket{\varphi_{\scriptscriptstyle x}},\ket{\varphi_{\scriptscriptstyle x}}^{\scriptscriptstyle \perp}\!\big\}$ for Alice and $\big\{\!\ket{\varphi_{\scriptscriptstyle y}},\ket{\varphi_{\scriptscriptstyle y}}^{\scriptscriptstyle \perp}\!\big\}$ for Bob, where 
\begin{eqnarray}\label{chained-measurements-Alice}
\ket{\varphi_{\scriptscriptstyle x}}_{\!\scriptscriptstyle A}\ =\ \cos\,(x-1)\theta\,\ket{0}_{\!\scriptscriptstyle A}\,+\,\sin\,(x-1)\theta\,\ket{1}_{\!\scriptscriptstyle A},\\\label{chained-measurements-Bob}
\ket{\varphi_{\scriptscriptstyle y}}_{\!\scriptscriptstyle B}\ =\ \cos\,(y-\tfrac{1}{2})\theta\,\ket{0}_{\!\scriptscriptstyle B}\,+\,\sin\,(y-\tfrac{1}{2})\theta\,\ket{1}_{\!\scriptscriptstyle B},
\end{eqnarray}
for $x,y\in\mathfrak{M}\equiv\{1,2,\dots,M\}$ and $\theta=\tfrac{\pi}{2M-1}$. See Fig.~\ref{Fig-Chained-Bell-Inequalities} (on the right) for illustration. Then all terms in Eq.~(\ref{Chained-Bell}) become equal to $\langle ab\rangle_{\scriptscriptstyle xy}=\cos\,\theta$ except for the last one which is equal to $\langle ab\rangle_{\scriptscriptstyle 1M}=-1$, and hence we get
\begin{eqnarray}\nonumber
S_{\scriptscriptstyle M}&\!\!=\!\!&(2M-1)\cos\,\theta+1\\
&\!\!\gtrsim\!\!&(2M-1)\Big(1-\tfrac{\pi^2}{2\,(2M-1)^2}\Big)+1\,.
\end{eqnarray}
Substituted into Eq.~(\ref{free-choice-upper-chained}), this gives
\begin{eqnarray}
\mu_{\scriptscriptstyle F}\ \leqslant\ \frac{\pi^2}{4\,(2M-1)}\xymatrix{\ar[r]_{\atop M\,\rightarrow\,\infty} &}0\,.
\end{eqnarray}
This justifies our claim about freedom of choice reduced to zero in any local account of the quantum state $\ket{\Phi_{\scriptscriptstyle +}}$. Generalisation to any maximally entangled two-qubit state is straightforward.

\bibliography{CombQuant}

\makeatletter\@input{xx.tex}\makeatother